\renewcommand{\sc}{{{\mathrm{sc}}}}
\newcommand{\score}{{{\mathrm{score}}}}
\newcommand{\naturals}{{{\mathbb{N}}}}
\newcommand{\reals}{{{\mathbb{R}}}}
\newtheorem{theorem}{Theorem}
\newtheorem{corollary}{Corollary}
\newtheorem{lemma}{Lemma}
\newtheorem{definition}{Definition}
\newtheorem{example}{Example}
\newtheorem{proposition}{Proposition}
\crefname{table}{Table}{Tables}
\Crefname{table}{Table}{Tables}
\crefname{figure}{Figure}{Figures}
\crefname{theorem}{Theorem}{Theorems}
\crefname{definition}{Definition}{Definitions}
\crefname{corollary}{Corollary}{Corollaries}
\crefname{observation}{Observation}{Observations}
\crefname{lemma}{Lemma}{Lemmas}
\crefname{example}{Example}{Examples}
\crefname{reduction}{Reduction}{Reductions}
\crefname{construction}{Construction}{Constructions}
\crefname{subsection}{Subsection}{Subsections}
\crefname{section}{Section}{Sections}
\crefname{proposition}{Proposition}{Propositions}
\crefname{algorithm}{Algorithm}{Algorithms}
\Crefname{equation}{Inequality}{Inequalities}
\newcommand{\calR}{\mathcal{R}}
\newcommand{\calA}{\mathcal{A}}
\crefname{eq}{equality}{equalities}
\newcommand{\pav}{{{{\mathrm{PAV}}}}}
\newcommand{\phrag}{{{{\mathrm{Phrag}}}}}
\newcommand{\hard}{{{{\mathrm{hard}}}}}
\newcommand{\last}{{{{\mathrm{last}}}}}
\newcommand{\seqpav}{{{{\mathrm{seq\text{-}PAV}}}}}
\newcommand{\opt}{{{{\mathrm{opt}}}}}
\def\argmax{\mbox{argmax}}
\newcommand{\shortcite}[1]{\cite{#1}}
\newtheorem*{rep@theorem}{\rep@title}
\newcommand{\newreptheorem}[2]{%
\newenvironment{rep#1}[1]{%
 \def\rep@title{#2 \ref{##1}}%
 \begin{rep@theorem}}%
 {\end{rep@theorem}}}
\title{Proportionality Degree of Multiwinner Rules}
\author{Piotr Skowron\\
University of Warsaw\\
\texttt{p.skowron@mimuw.edu.pl}}
\date{}
\begin{document}

\maketitle

\begin{abstract}
We study multiwinner elections with approval-based preferences. An instance of a multiwinner election consists of a set of alternatives, a population of voters---each voter approves a subset of alternatives, and the desired committee size $k$; the goal is to select a committee (a~subset) of $k$ alternatives according to the preferences of the voters. We investigate a number of election rules and ask whether the committees that they return represent the voters proportionally. In contrast to the classic literature, we employ quantitative techniques that allow to measure the extent to which the considered rules are proportional. This allows us to arrange the rules in a clear hierarchy. For example, we find that Proportional Approval Voting (PAV) has better proportionality guarantees than its sequential counterpart, and that Phragm\'{e}n's Sequential Rule is worse than Sequential PAV. Yet, the loss of proportionality for the two sequential rules is moderate and in some contexts can be outweighed by their other appealing properties. Finally, we measure the tradeoff between proportionality and utilitarian efficiency for a broad subclass of committee election rules.      
\end{abstract}

\section{Introduction}

An approval-based committee election rule (an ABC rule, in short) is a function that given a set of~$m$ candidates~$C$ (the candidates are also referred to as alternatives), a population of~$n$ voters---each approving a subset of~$C$---and an integer~$k$ representing the desired committee size, returns a $k$-element subset of~$C$. Committee election rules are important tools that facilitate collective decision making in various contexts such as electing representative bodies (e.g., supervisory boards, trade unions, etc.), finding responses to database querying~\cite{dwo-kum-nao-siv:c:rank-aggregation, proprank}, suggesting collective recommendations for groups~\cite{budgetSocialChoice, bou-lu:c:value-directed}, and managing discussions on proposals within liquid democracy~\cite{BKNS14a}. Further, since committee elections are a special case of participatory budgeting (PB)~\cite{participatoryBudgeting, knapsackVoting}, good understanding of ABC rules is a prerequisite for designing effective PB methods.      

The applicability of various ABC rules often depends on the particular context, yet an important requirement one often imposes on a committee election rule is that it should be fair to (groups of) voters. While fairness is a broad concept putting various principles under the same umbrella, in certain types of collective decision-making---specifically when the goal is to elect a committee---it is often argued that a fair rule should be \emph{proportional}\footnote{Other basic axioms describing fairness requirements are, e.g., anonymity or solid coalitions property~\cite{dum:b:voting}.}, as illustrated by the following example.\footnote{Indeed, proportional representation (PR) electoral systems are often argued to be more fair than non-proportional ones~\cite{grofmanProspectives,grofmanChoosingElectoral}. Yet, the criticism of proportionality also appears in the literature, and it comes from the two main directions---one stems from the analysis of voting power in the elected committees~\cite{FesMac98, shapley_shubik_1954}; the other one is grounded in the arguments in favor of degressive proportionality~\cite{RePEc:ucp:jpolec:doi:10.1086/670380}. Proportionality is also related to fairness in other domains, such as allocation of individual~\cite{fair_division_steinhaus} and public goods~\cite{FMS18}.} 
\begin{example}\label{ex:simple_example}
Consider an election with 30 candidates, $c_1, c_2, \ldots, c_{30}$, and 100 voters having the following approval-based preferences. The first 60 voters approve the subset $C_1 = \{c_1, \ldots, c_{10}\}$; the next 30 voters approve $C_2 = \{c_{11}, \ldots, c_{20}\}$, and the last 10 approve $C_3 = \{c_{21}, \ldots, c_{30}\}$. When the size of the committee to be elected is 10, then a proportional rule should select a committee with $6$ members coming from $C_1$, 3 members coming from $C_2$, and 1 member from $C_3$.
\end{example}

Identifying a proportional committee in \Cref{ex:simple_example} was easy due to the voters' preferences having a very specific structure: each two approval sets were either the same or disjoint. In a more general case, when the approval sets can arbitrarily overlap, the answer is no longer straightforward. Several approaches to formalizing proportionality have been proposed in the literature~\cite{justifiedRepresenattion, AEHLSS18, bri-las-sko:c:apportionment, FMS18, lac-sko:t:approval-thiele, pjr17}. These approaches are mainly axiomatic---they formally define natural properties referring to proportionality, and classify known rules based on whether they satisfy these axioms or not. With a few notable exceptions (e.g., \cite{FMS18,pjr17}) the approaches from the literature are qualitative, giving only a yes/no answer to the question of ``Is a given rule proportional or not?''. Hence, in essence, they are not capable of measuring the extent to which proportionality is satisfied or violated. This is a serious drawback since there is no single rule satisfying all the desired properties, and the choice of the rule boils down to a judgment call. The mechanism designer must decide on which properties she finds most critical and which tradeoff she is willing to accept. However, in order to make such a decision one primarily needs to understand these tradeoffs; in particular the extent to which properties of interest are violated.    

In our study we employ a quantitative approach. For several rules we will determine their \emph{proportionality degrees}, i.e., we will assess the extent to which they satisfy proportionality. Informally speaking, the proportionality degree of a rule is a function specifying how the rule treats groups of voters with cohesive preferences, depending on the size of these groups. At the same time, our proportionality degree has the form of a guarantee---it gives the best possible bounds on the proportionality that the rule cannot violate, no matter what are the voters' preferences.   

Our definition of proportionality degree is closely related to the concept proposed by Skowron~et~al.~\shortcite{proprank}. In fact, their work establishes certain bounds on the proportionality degree for a number of ABC rules. Unfortunately, these bounds cannot be used for comparing the rules, since they are not tight; not even asymptotically. Our main contribution is that we derive new almost tight bounds that allow to arrange the rules that we study in a clear hierarchy, based on how proportional they are. A tight estimation of the proportionality degree is already known for Proportional Approval Voting (PAV)~\cite{AEHLSS18}---we generalize this result and calculate the proportionality degree for convex Thiele methods, a broad class of rules which---in particular---includes PAV. Additionally, we find close estimations of the proportionality degree for two sequential methods often considered in the literature: Sequential PAV and Phragm\'{e}n's Sequential Rule.

Our findings can be summarized as follows. As far as the proportionality guarantees are concerned, Sequential PAV is better than Phragm\'{e}n's Sequential Rule; further, PAV is better than both of the sequential rules. For reasonable committee sizes the proportionality degree of Sequential PAV is no worse than roughly 70\% of the proportionality degree of PAV. The proportionality degree of Phragm\'{e}n's Sequential Rule is roughly half as much as that for PAV. On the one hand, our results suggest that PAV should be preferred whenever proportionality is the primary goal. On the other hand, they demonstrate that the loss of proportionality for Sequential PAV and Phragm\'{e}n's Sequential Rule is moderate. Thus, using these rules can be justified in cases when the decision maker considers their other distinctive properties equally important to proportionality (in \Cref{sec:phragmen,sec:seqpav} we briefly recall a few arguments that can speak in favor of one of the two sequential rules).      
  
Finally, in \Cref{sec:utilitarian_efficiency} we apply the same quantitative methodology to criteria other than proportionality. Specifically, we focus on the utilitarian efficiency, measured as the total number of approvals that the members of the elected committee obtain. We establish accurate estimations on the loss of the utilitarian efficiency for convex Thiele methods. For Sequential PAV and for Phragm\'{e}n's Sequential Rule such estimations are already known~\cite{lac-sko:quantitative}. Together with the aforementioned proportionality guarantees, this allows to quantify the tradeoff between the level of proportionality and utilitarian efficiency, that is to estimate the price of fairness for particular (classes of) rules. Similar tradeoffs have been considered in the resource allocation domain~\cite{BertsimasFT11,CKKK12}.

\section{The Model}

For each natural number $p \in \naturals$, we set $[p]= \{1, 2, \ldots, p\}$ and $[p]_{0} = \{0, 1, 2, \ldots, p\}$.
For each set~$X$ by $S_p(X)$ we denote the set of all $p$-element subsets of $X$, and by $S(X)$ we denote the powerset of $X$, i.e., $S(X) = \bigcup_{p = 0}^{|X|} S_p(X)$.

An approval-based election is a triple $(N, C, A)$, where $N = \{1, 2, \ldots n\}$ is the set of \emph{voters}, $C = \{c_1, c_2, \ldots, c_m\}$ is the set of \emph{candidates}, and $A$ is an \emph{approval-based profile} (or, in short, a \emph{profile}), i.e., a function $A\colon N \to S(C)$ that maps each voter to a subset of $C$; intuitively, $A(i)$ consists of candidates that voter $i$ finds acceptable and is referred to as the \emph{approval set} of $i$. Conversely, for each candidate $c \in C$ by $N(c)$ we denote the set of voters who approve $c$, i.e., $N(c) = \{i \in N \mid c \in A(i)\}$. Whenever we consider an election, we will implicitly assume that $N$ and $C$ refer to the sets of voters and candidates, respectively. Similarly, we will always assume that $n$ and $m$ denote the number of voters and candidates, respectively. For the sake of the simplicity of the notation we will also identify elections with approval-based profiles, assuming that the sets of voters and candidates are implicitly encoded as, respectively, the domain and the union of elements in the range of $A$. We denote the set of all elections by $\calA$ (yet, following our convention, we will treat the elements of $\calA$ as if these were simply profiles).

\subsection{Approval-Based Committee Election Rules}\label{sec:rules_desc}

An approval-based committee election rule (an ABC rule, in short) is a function $\calR$ that for each approval-based profile $A$ and a positive integer $k$ returns a nonempty set of size-$k$ subsets of candidates, i.e., $\calR(A, k) \in S(S_k(C))$. We will refer to the elements of $S_k(C)$ as to size-$k$ \emph{committees}, or simply as to \emph{committees} when the committee size~$k$ is known from the context. We will call the elements of $\calR(A, k)$ \emph{winning committees}. We say that a candidate $c$ represents voter $i$ (or that $c$ is a representative of $i$) if $c$ is a member of a winning committee and if $c \in A(i)$.

Below, we recall definitions of several prominent (classes of) approval-based committee election rules, often studied in the literature in the context of proportional representation. 
\begin{description}
\item[Thiele Methods.] For a non-increasing function $\lambda\colon \reals \to \reals$ the $\lambda$-Thiele rule is defined as follows.\footnote{For the definition of the rules it is sufficient to have $\lambda$ specified only on the set of natural numbers, yet later on it will be sometimes convenient for us to work with the continuous version of $\lambda$.} The $\lambda$-score that a committee $W \in S_k(C)$ gets from a voter $i \in N$ is equal to $\sc_{\lambda}(W, i) = \sum_{j = 1}^{|A(i) \cap W|} \lambda(j)$. The total $\lambda$-score of $W$ is the sum of the scores that it garners from all the voters: $\sc_{\lambda}(W) = \sum_{i \in N} \sc_{\lambda}(W, i)$. The $\lambda$-Thiele rule returns the committees with the highest $\lambda$-score.

\item[Proportional Approval Voting (PAV).]  PAV is the $\lambda_\pav$-Thiele rule for $\lambda_\pav(i) = \nicefrac{1}{i}$. Using the harmonic sequence to compute the score ensures that PAV satisfies particularly appealing properties pertaining to proportionality~\cite{justifiedRepresenattion, bri-las-sko:c:apportionment, lac-sko:t:approval-thiele}. 

\item[Sequential Proportional Approval Voting (Seq-PAV).] This is an iterative rule that involves $k$ steps. It starts with an empty committee $W = \emptyset$ and in each step it adds to $W$ the candidate that increases the PAV score of $W$ most. Thus, intuitively, each voter starts with the same voting power (equal to 1), which can decrease over time. When the voter gets her $i$-th representative in the committee, then her voting power decreases from $\nicefrac{1}{i}$ to $\nicefrac{1}{i+1}$. 

\item[Phragm\'{e}n's Sequential Rule.] This is also an iterative rule that is usually defined as a load balancing procedure. Each candidate $c \in C$ is associated with one unit of load; if $c$ is selected this unit of load needs to be distributed among the voters who approve $c$. In each step the rule selects the candidate that minimizes the load assigned to the maximally loaded voter. Formally, let $\ell_i(j)$ be the total load assigned to voter $i$ after the $j$-th iteration ($\ell_i(0) = 0$ for all $i \in N$). In each step $j$ the rule selects a candidate $c$ and finds a distribution $\{\delta_{i}(j)\}_{i \in N}$ of one additional unit of load such that:
\begin{inparaenum}[(i)]
\item $\sum_{i \in N} \delta_{i}(j) = 1$,
\item $\delta_{i}(j) \geq 0$ for each $i \in N$, 
\item $\delta_{i}(j) = 0$ for each $i \notin N(c)$, and
\item $\max_{i \in N} \big(\ell_{i}(j-1) + \delta_{i}(j)\big)$ is minimized.
\end{inparaenum} 
(Note that minimizing the maximum load (iv) is the criterion for choosing both $c$ and $\delta$.)
Finally, for each $i \in N$ the rule updates the total load, $\ell_{i}(j) = \ell_{i}(j-1) + \delta_{i}(j)$. For more discussion on the rule and its properties we refer to the work of Brill~et~al.~\shortcite{aaai/BrillFJL17-phragmen} and to the survey by Janson~\shortcite{Janson16arxiv}.

\item[Phragm\'{e}n's Maximal Rule.] This method is similar to the Phragm\'{e}n's Sequential Rule, but in this case the rule does not proceed sequentially---the whole optimization is performed only once, globally. The criterion for choosing the winning committee and the corresponding load distribution is that the total load assigned to the maximally loaded voter should be minimized.
\end{description}

\subsection{Proportionality Degree of ABC Rules}

This section defines and explains the notion of proportionality used in our comparative study. 

Given an approval-based profile $A$ and a desired committee size $k \in \naturals$, we say that a group of voters $V \subseteq N$ is $\ell$-large if $|V| \geq \ell \cdot \frac{n}{k}$. The satisfaction of the group $V$ from a committee $W \in S_k(C)$ is defined as the average number of representatives that a voter from $V$ has in committee $W$:
\begin{align*}
\mathrm{sat}_V(A, W) = \frac{1}{|V|} \sum_{i \in V} |W \cap A(i)|\text{.}
\end{align*}

\begin{definition}\label{def:proportionality_guarantee}
We say that a function $g\colon \naturals \times \naturals \to \reals$ is a $k$-proportionality guarantee for an ABC rule $\calR$ if for each approval-based profile $A \in \calA$, each $\ell$-large group of voters $V \subseteq N$, and for each winning committee $W \in \calR(A, k)$, the following implication holds:
\begin{align*}
\Big|\bigcap_{i \in V} A(i)\Big| \geq g(\ell, k) \implies \mathrm{sat}_V(A, W) \geq g(\ell, k) \text{.}
\end{align*}
\end{definition}

Note that the condition in \Cref{def:proportionality_guarantee} must hold for all values of $n$ and $m\geq k$; in particular, $g$ cannot depend on $n$, $m$ nor on the specific structural properties of approval profiles.

Let us give an informal explanation of \Cref{def:proportionality_guarantee}; intuitively, this definition says that an $\ell$-large group of voters is guaranteed $g(\ell, k)$ representatives in the size-$k$ committee elected by a rule $\calR$, no matter what the preferences of the other voters are. The average satisfaction of at least $g(\ell, k)$ is guaranteed only to $\ell$-large groups that have coherent enough preferences, that is for groups that agree on some $g(\ell, k)$ common candidates. Indeed, it is clear that if each member of a group $V$ approves different candidates, then $V$ cannot be satisfied by any rule; in fact we will call such $V$ simply a \emph{set} instead of a \emph{group} to indicate that there is no agreement between the voters in $V$, so they cannot be related or grouped based of their preferences. Thus, summarizing, \Cref{def:proportionality_guarantee} specifies how the rule treats cohesive groups of certain size.   

One intuitively expects that a proportional rule should have a guarantee of $g(\ell, k) = \ell$. Indeed, a group $V$ such that
\begin{inparaenum}[(i)]
\item $|V| \geq \ell \cdot \frac{n}{k}$, and
\item $|\bigcap_{i \in V} A(i)| \geq \ell$,
\end{inparaenum}
is large enough to deserve $\ell$ representatives in the elected committee, and choosing $\ell$ candidates that all members of the group agree on is feasible. Unfortunately, such a guarantee is not possible to achieve. From the recent results of Aziz et al.~\shortcite{AEHLSS18} it follows that there exists no rule with the proportionality guarantee satisfying $g(\ell, k) > \ell-1$. The same work proves that PAV matches this lower bound.

\begin{theorem}[Aziz et al.~\shortcite{AEHLSS18}]\label{thm:pav_prop_degree}
PAV has the proportionality guarantee of $g(\ell, k) = \ell-1$.
\end{theorem}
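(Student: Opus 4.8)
The plan is a local exchange argument by contradiction. Suppose some PAV-winning committee gave an $\ell$-large, sufficiently cohesive group average satisfaction strictly below $\ell-1$; I would then exhibit a single candidate swap that strictly increases the PAV score, contradicting optimality.

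Fix a profile $A\in\calA$, a committee size $k$, an $\ell$-large group $V$ with $|C^{*}|\ge \ell-1$ where $C^{*}:=\bigcap_{i\in V}A(i)$, and a PAV-winning committee $W$ (a size-$k$ committee maximizing $\sc_{\lambda_{\pav}}$). For $i\in N$ write $x_{i}=|A(i)\cap W|$, so $\mathrm{sat}_{V}(A,W)=\bar{x}:=\tfrac1{|V|}\sum_{i\in V}x_{i}$. If $C^{*}\subseteq W$ then every $i\in V$ has $x_i\ge|C^{*}|\ge\ell-1$ and we are done, so assume $C^{*}\not\subseteq W$. Record the two PAV marginals relative to $W$: adding $c\notin W$ raises the $\lambda_{\pav}$-score by $\mathrm{gain}(c):=\sum_{i\in N(c)}\tfrac1{x_{i}+1}$, and removing $d\in W$ lowers it by $\mathrm{loss}(d):=\sum_{i\in N(d)}\tfrac1{x_{i}}$ (this is exactly where the harmonic weights of PAV enter). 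Two elementary estimates: (a) by double counting, $\sum_{d\in W}\mathrm{loss}(d)=\sum_{d\in W}\sum_{i\in N(d)}\tfrac1{x_{i}}=\sum_{i\in N\,:\,x_{i}>0}1\le n$, so some $d^{*}\in W$ has $\mathrm{loss}(d^{*})\le n/k$; (b) since $V\subseteq N(c)$ for every $c\in C^{*}$ and $t\mapsto \tfrac1{t+1}$ is convex, Jensen's inequality gives $\mathrm{gain}(c)\ge \sum_{i\in V}\tfrac1{x_{i}+1}\ge \tfrac{|V|}{\bar{x}+1}$ for every $c\in C^{*}$.

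Now assume for contradiction that $\bar{x}<\ell-1$. Because $C^{*}\cap W\subseteq A(i)\cap W$ for each $i\in V$, we have $|C^{*}\cap W|\le \min_{i\in V}x_{i}\le \bar{x}$, hence $|C^{*}\setminus W|\ge(\ell-1)-\bar{x}>0$; pick $c^{*}\in C^{*}\setminus W$. Since $|V|\ge \ell\cdot n/k$ and $\bar{x}+1<\ell$, estimate (b) gives $\mathrm{gain}(c^{*})\ge \tfrac{\ell n/k}{\bar{x}+1}>n/k\ge \mathrm{loss}(d^{*})$. Consider the size-$k$ committee $W'=(W\setminus\{d^{*}\})\cup\{c^{*}\}$: its $\lambda_{\pav}$-score equals $\sc_{\lambda_{\pav}}(W)-\mathrm{loss}(d^{*})$ plus the gain of adding $c^{*}$ to $W\setminus\{d^{*}\}$, and the latter is at least $\mathrm{gain}(c^{*})$ because deleting $d^{*}$ only decreases the counts $x_{i}$, which only increases each term $\tfrac1{x_{i}+1}$. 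Hence $\sc_{\lambda_{\pav}}(W')\ge \sc_{\lambda_{\pav}}(W)-\mathrm{loss}(d^{*})+\mathrm{gain}(c^{*})>\sc_{\lambda_{\pav}}(W)$, contradicting the maximality of $W$. Therefore $\mathrm{sat}_{V}(A,W)=\bar{x}\ge\ell-1$, as claimed.

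The argument is short once assembled; the only delicate points are ensuring the swap is actually available --- that $C^{*}\setminus W\neq\emptyset$ whenever $\bar{x}<\ell-1$, which is precisely the bound $|C^{*}\cap W|\le\bar{x}$ --- and that after deleting $d^{*}$ the residual gain dominates $\mathrm{gain}(c^{*})$, which is what lets everything be compared against the single threshold $n/k$. I expect the ``main obstacle'' to be purely bookkeeping: getting the direction of the $\lambda_{\pav}$-score change right across the simultaneous removal and insertion, and keeping the inequality strict. (That $\ell-1$ cannot be improved is not part of this statement; it follows from the general impossibility $g(\ell,k)>\ell-1$ recalled above.)
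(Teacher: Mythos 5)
Your proof is correct and follows essentially the same swap-optimality argument that the paper attributes to Aziz et al.\ and generalizes to convex Thiele rules in \Cref{thm:general_bounds}: the double-counting bound $\sum_{d\in W}\mathrm{loss}(d)\le n$, the Jensen step giving $\mathrm{gain}(c^*)\ge |V|/(\bar x+1)$, and the observation that deleting $d^*$ first only helps the insertion are all the intended ingredients, and the availability of $c^*\in C^*\setminus W$ via $|C^*\cap W|\le\bar x$ is handled correctly. The only cosmetic difference is that you isolate a single cheapest swap $(c^*,d^*)$ rather than summing $\Delta(c,c')$ over all $c'\in W$ as in the paper's general proof, which for $\lambda(i)=1/i$ amounts to the same computation.
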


In order to facilitate our further discussion we define two related concepts.

\begin{definition}
Let $\mathrm{guar}(\calR)$ be the set of all $k$-proportionality guarantees for a rule $\calR$.
The $k$-proportionality degree of an ABC rule $\calR$ is the function:
\begin{align*}
d_\calR(\ell, k) \; = \; \sup_{\mathclap{g \in \mathrm{guar}(\calR)}} \; g(\ell, k) \text{.}
\end{align*}
In other words, the $k$-proportionality degree is the best possible $k$-proportionality guarantee one can find for the rule. 
The proportionality degree of $\calR$ is the function $d_\calR(\ell) = \min_{k} d_\calR(\ell, k)$, i.e., it defines the best possible guarantee that holds irrespectively of the size of the committee.
\end{definition}

The idea of measuring proportionality of multiwinner rules as the average number of representatives that cohesive groups of voters get was first proposed by S{\'a}nchez-Fern{\'a}ndez et al.~\shortcite{pjr17} (they call the average number of representatives the \emph{average satisfaction} of a voter). Though, the first estimation of the proportionality degree for a multiwinner rules appears only in the subsequent work of Aziz et al.~\shortcite{AEHLSS18} (see~\Cref{thm:pav_prop_degree}, above); the analysis there uses a swap-optimality argument---they show that if there existed a committee $W$ witnessing that the proportionality guarantee of PAV is lower than $d_\pav(\ell) = \ell-1$, then it would be possible to find a pair of candidates, $c \in W$ and $c' \notin W$, such that replacing $c$ with $c'$ in $W$ would improve the PAV score of $W$. The same argument applies to a local-search heuristic for PAV yielding a polynomial-time rule with virtually the same proportionality degree as PAV---this result solved an open question from 2015. Next, Skowron~et~al.~\shortcite{proprank} analyzed the proportionality degree of a number of ABC rules (though, the definition of the proportionality degree appears there only implicitly), focusing on those that satisfy the committee enlargement monotonicity. However, their estimations are not asymptotically tight, and are not sufficient to compare the studied rules. E.g., for Phragm\'{e}n's Sequential Rule, they showed that $d_{\phrag}(\ell, k) \geq \frac{\ell(\ell+1)}{5k}$; we will strengthen this result by showing that $d_{\phrag}(\ell) = \frac{\ell}{2} \pm \frac{1}{2}$ (note that $k \geq \ell$, and that for ``small'' groups $k$ is much greater than $\ell$, and so the estimation given in~\cite{proprank} is highly inaccurate; specifically, it does not allow to obtain a positive bound on $d_{\phrag}(\ell)$ nor to compare Phragm\'{e}n's Sequential Rule with virtually any reasonable rule known in the literature; ).  
The aforementioned results from the literature together with our technical results are summarized in \Cref{tab:seq_pav_lower_bound2}.

\renewcommand{\arraystretch}{1.2} 
\begin{table}[tb!]
\begin{center}
   \begin{tabular*}{\linewidth}{@{\extracolsep{\fill}}@{}lp{4.5cm}p{5.3cm}@{}}
    \toprule
    \textbf{Rule} & \textbf{Proportionality degree} & \textbf{Utilitarian efficiency} \\ 
    \midrule
    seq-Phragm\'{e}n & $\frac{\ell -1}{2}$~(\Cref{thm:phrag_guarantee}) & $\Theta\left(\frac{1}{\sqrt{k}}\right)$~\cite{lac-sko:quantitative} \\ 
    \midrule
    max-Phragm\'{e}n & $1$~(\Cref{prop:max_phrag_guarantee}) & $\Theta\left(\frac{1}{k}\right)$~\cite{lac-sko:quantitative} \\ 
    \midrule
    seq-PAV & \mbox{$\approx 0.7\ell-1$ for $k \leq 200$ \mbox{(\Cref{sec:seqpav})}} & $\Theta\left(\frac{1}{\sqrt{k}}\right)$~\cite{lac-sko:quantitative} \\ 
    \midrule
    \textbf{Thiele Rules} & \multicolumn{2}{p{9.8cm}}{General characterizations are given in \Cref{thm:general_bounds,thm:utilitarian_efficiency} (lower bounds) and in \Cref{prop:general_bound_tight,prop:utilitarian_efficiency_tight} (upper bounds).} \\ 
    \midrule
    $\qquad\lambda(i) = \frac{1}{i}$ (PAV) & $\ell-1 + o\left(\frac{1}{k}\right)$~(\Cref{thm:general_bounds}, and~\cite{AEHLSS18}) & $\Theta\left(\frac{1}{\sqrt{k}}\right)$~(\Cref{thm:utilitarian_efficiency}, and \cite{lac-sko:quantitative})\\
    $\qquad\lambda(i) = \left(\frac{1}{i}\right)^{\nicefrac{2}{3}}$ & see~\Cref{fig:sqrt_pav_bounds_all} & $\Theta\left(\frac{1}{k^{\nicefrac{2}{5}}}\right)$~(\Cref{thm:utilitarian_efficiency}) \\  
    $\qquad\lambda(i) = \frac{1}{\sqrt{i}}$ & see~\Cref{fig:sqrt_pav_bounds_all} & $\Theta\left(\frac{1}{\sqrt[3]{k}}\right)$~(\Cref{thm:utilitarian_efficiency}) \\
    $\qquad\lambda(i) = \frac{1}{i^2}$ & see~\Cref{fig:sqrt_pav_bounds_all} & $\Theta\left(\frac{1}{k^{\nicefrac{2}{3}}}\right)$~(\Cref{thm:utilitarian_efficiency}) \\
    \bottomrule 
   \end{tabular*}
\end{center}
\caption{The proportionality degree and utilitarian efficiency for selected multiwinner rules.}
\label{tab:seq_pav_lower_bound2}
\end{table}

In the remainder of this section we compare the notion of the proportionality guarantee/degree with other concepts pertaining to proportionality, studied in the literature.
\begin{description}
\item[Extended Justified Representation (EJR).] EJR~\cite{justifiedRepresenattion} requires that each $\ell$-large group of voters $V$ with $|\bigcap_{i \in V} A(i)| \geq \ell$ must contain a voter who approves at least $\ell$ members of the winning committee(s). This property is very natural and interesting, yet it also has certain drawbacks. On the one hand, it seems quite weak---it aims at analyzing how voting rules treat \emph{groups} of voters, yet for each group it only enforces the requirement that there must exist \emph{some} voter in the group who is well represented. On the other hand, it appears very strong---if each voter from $V$ approves $\ell-1$ members of the winning committee, then $V$ already witnesses that the rule violates the property. Indeed, PAV is almost the only natural known rule satisfying EJR.\footnote{Other rules satisfying EJR are either very similar to PAV (e.g., the local search algorithm for PAV) or very technical and specifically tailored to satisfy the particular property~\cite{AEHLSS18}} Our approach, on the other hand, provides a fine-grained guarantee on how well a given rule represents certain groups of voters, and focuses on the average satisfaction of the voters within the group rather than on the satisfaction of the most satisfied voter.

Yet, there is also a relation between EJR and the proportionality degree. Clearly, if for some $\alpha \in [0,1]$, the proportionality guarantee of a rule $\calR$ satisfies \mbox{$g_\calR(\ell, k) \geq \alpha \ell$}, then---by the pigeonhole principle---$\calR$ also satisfies an $\alpha$-approximation of EJR. For the reverse direction a weaker implication has been shown by S{\'a}nchez-Fern{\'a}ndez~et~al.~\shortcite{pjr17}: if a rule $\calR$ satisfies EJR, then it has the proportionality guarantee of $\calR$ is \mbox{$g_\calR(\ell, k) = \frac{\ell-1}{2}$}.\footnote{The main argument bases on applying the EJR property recursively: Assume a rule satisfies EJR and consider an $\ell$-large group of voters $V$ with $|\bigcap_{i \in V} A(i)| \geq \ell$. Then by EJR, we know that there exists $v \in V$ with at least $\ell$ representatives. Now, observe that $V \setminus \{v\}$ is an $(\ell-1)$-large group and that it is cohesive. Thus, there exists $v' \in V \setminus \{v\}$ with at least $\ell-1$ representatives, etc.}

\item[Lower Quota.] Another approach to investigating proportionality of ABC rules is to analyze how they behave for certain structured preferences; for example, when the voters and the candidates can be divided into disjoint groups so that each group of voters approves exactly one group of candidates. Such profiles can be represented as party-list elections, and so the classic notions of proportionality for apportionment methods~\cite{BaYo82a, Puke14a} apply. For instance, it can be shown that EJR generalizes the definition of lower-quota. Informally speaking, for party-list elections lower quota requires that if a population of at least $\ell \cdot \nicefrac{n}{k}$ voters vote for a single party, then this party gets at least $\ell$ seats in the elected committee. Brill~et~al~\shortcite{bri-las-sko:c:apportionment} discussed how different ABC rules behave for such restricted preferences, and Lackner and Skowron~\shortcite{lac-sko:t:approval-thiele} proved that under certain assumptions the behavior of ABC rules uniquely extends from party-list profiles to general preferences. Our study, on the other hand, answers whether the considered rules still behave proportionally for general preferences. 
\end{description}


\section{Proportionality Degree of Phragm\'{e}n's Sequential Rule}\label{sec:phragmen}

In this section we establish the proportionality degree of the Phragm\'{e}n's Sequential Rule. However, we first provide an alternative definition of the rule that will be more convenient to work with. According to our new definition the voters gradually earn virtual money (credits) which they then use to buy committee members. Specifically, we assume that each voter earns money with the constant speed of one credit per unit time (time is continuous, not discrete). Buying a candidate costs $n$ credits, and a voter pays only for a candidate that she approves of. 
We say that a candidate $c$ is \emph{electable} if $c$ is approved by the voters who altogether have at least $n$ credits. In the first time moment when there exists an electable candidate $c$ the rule adds this candidate to the committee (ties are broken arbitrarily) and resets to zero the credits of all voters who approve $c$---intuitively, these voters pay the total amount of $n$ for adding $c$ to the committee. The rule stops when $k$ candidates are selected.

Let us now argue that the so-described process is equivalent to the Phragm\'{e}n's Sequential Rule. First, it is apparent that in the original definition of the Phragm\'{e}n's Sequential Rule we can assume that each candidate is associated with $n$ units of load instead of one. When a candidate $c$ is selected then its load is distributed so that all the voters who approve $c$ have the same total load, which is the maximum load among all the voters. The same candidate would be selected by the above described process, and each voter $v$ approving $c$ would pay for $c$ the number of credits which is equal to the difference between its current total load and the previous one (just before $c$ was selected). The formal proof of the equivalence of the two definitions is provided in \Cref{seq:phragmen_equivalence}.


Using the new definition allows us to obtain a new accurate estimation of the proportionality degree for the Phragm\'{e}n's Sequential Rule. Informally speaking, our guarantee says that the Phragm\'{e}n's Sequential Rule can be at most twice less proportional than PAV. The crucial element of the proof is the analysis of a certain potential function.
Intuitively, this function measures how unfair the committee iteratively built by the Phragm\'{e}n's Sequential Rule can be; we will show that 
\begin{inparaenum}[(i)]
\item this unfairness cumulates, and that
\item eventually the accumulated unfairness must be used to compensate the voters who got less representatives than they deserved.
\end{inparaenum}
(The most difficult part of the proof was coming up with the right potential function.)

\begin{theorem}\label{thm:phrag_guarantee}
The proportionality degree of Phragm\'{e}n Sequential Rule satisfies $d_{\mathrm{Phrag}}(\ell) \geq \frac{\ell-1}{2}$.
\end{theorem}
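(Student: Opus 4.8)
The plan is to use the ``virtual money'' reformulation of the rule given above and to track the credits held by a cohesive group of voters over the course of the (continuous-time) election. Fix a committee size $k$, an $\ell$-large group $V$, a winning committee $W$, and put $C^\ast=\bigcap_{i\in V}A(i)$. Since the statement only asks for a $k$-proportionality guarantee attaining $\tfrac{\ell-1}{2}$ for every $k$, it suffices to show that $|C^\ast|\ge\tfrac{\ell-1}{2}$ implies $\sat_V(A,W)\ge\tfrac{\ell-1}{2}$. If every candidate of $C^\ast$ is bought, then each $i\in V$ already has at least $|C^\ast|\ge\tfrac{\ell-1}{2}$ representatives and we are done, so assume some $c^\ast\in C^\ast\setminus W$. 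Write $\mathrm{cr}_i(t)$ for the credit of voter $i$ at time $t$, $P(t)=\sum_{i\in V}\mathrm{cr}_i(t)$, and let $T$ be the time the $k$-th candidate is bought.

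The heart of the proof is a pair of pointwise bounds on the ``potential'' $P$. First, because $c^\ast$ is never bought, its supporters (a superset of $V$) never hold $n$ credits simultaneously, so $P(t)\le n$ for all $t\le T$ --- in particular the process must halt the moment this threshold would be crossed. Second, each voter has earned at most $t$ credits by time $t$, so $P(t)\le|V|t$. Hence $P(t)\le\min(n,|V|t)$, which integrates to $\int_0^T P(t)\,dt\le nT-\tfrac{n^2}{2|V|}$ --- a sharpening of the trivial bound $\int_0^T P\le nT$ that is exactly what pushes the final estimate up to $\tfrac{\ell-1}{2}$ rather than only $\tfrac{\ell-2}{2}$. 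I expect isolating this sharpened inequality (equivalently: choosing the right potential and reading off the right consequence) to be the main conceptual step; indeed the text flags ``coming up with the right potential function'' as the hard part.

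The remainder is bookkeeping. A voter $i$ with $s_i=|W\cap A(i)|$ representatives has her credit reset exactly $s_i$ times, so $[0,T]$ splits into $s_i+1$ intervals on each of which $\mathrm{cr}_i$ rises linearly from $0$; writing $b_i$ for the length of the last interval, one gets $\int_0^T\mathrm{cr}_i\,dt=\tfrac12\sum(\text{interval lengths})^2\ge\tfrac12\bigl(\tfrac{(T-b_i)^2}{s_i}+b_i^2\bigr)$ by Cauchy--Schwarz, while $P(T)\le n$ gives $\sum_{i\in V}b_i\le n$. Summing the per-voter estimates, substituting into $\int_0^T P\le nT-\tfrac{n^2}{2|V|}$, applying Cauchy--Schwarz once more to extract $\sum_{i\in V}s_i$ from $\sum_i\tfrac{(T-b_i)^2}{s_i}$, and finally minimising the resulting expression over the $b_i$ subject to $\sum_i b_i\le n$, I would obtain $\sum_{i\in V}s_i\ge\frac{|V|\,(|V|T-n)}{2n}$. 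Since $|V|\ge\ell\tfrac nk$ and $T\ge k$ (the total earned credit $nT$ is at least the total spent $kn$), we get $|V|T\ge\ell n$, hence $\sum_{i\in V}s_i\ge\frac{(\ell-1)|V|}{2}$, i.e.\ $\sat_V(A,W)\ge\tfrac{\ell-1}{2}$; as $k$ was arbitrary this yields $d_{\mathrm{Phrag}}(\ell)\ge\tfrac{\ell-1}{2}$.

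Besides pinning down the sharpened potential inequality, the one delicate point is the claim $P(t)\le n$: with ties broken adversarially the rule can in principle buy several candidates at a single instant, so one must argue that if $c^\ast$ is left out of $W$ then the credit of its supporters never exceeds $n$ (the process stops as soon as that threshold is reached). I would handle this by first treating ``generic'' profiles (no simultaneous purchases) and then passing to the general case by a limiting/perturbation argument, or by verifying the claim directly from the stopping rule; degenerate cases such as voters with no representatives are absorbed harmlessly into the Cauchy--Schwarz step above.
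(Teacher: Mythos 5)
Your argument is correct, but it is organised quite differently from the paper's. Both proofs run on the credit reformulation and both rest on the same two boundary facts --- that the group's running credit never exceeds $n$ while a commonly approved candidate remains unelected, and that the process lasts for $T\ge k$ time units --- but the engines differ. The paper's potential is the \emph{within-group variance} $\phi_t=\sum_{i\in V}\bigl(p_t(i)-p_t(\mathrm{av})\bigr)^2$; it does a per-purchase accounting of $\Delta_\phi$ to conclude that a voter of $V$ pays on average at most $\tfrac{2n_V}{n_V+1}\cdot\tfrac{k}{\ell}<\tfrac{2k}{\ell}$ per representative, and then divides the total spend $n_Vk-n$ by that price. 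Your potential is instead the \emph{time-integral of the group's total credit}, sandwiched between the upper bound $\int_0^T\min(n,|V|t)\,dt=nT-\tfrac{n^2}{2|V|}$ and a lower bound obtained from the sawtooth shape of each voter's credit curve via Cauchy--Schwarz. I checked the closing optimisation: with the second Cauchy--Schwarz one gets $S:=\sum_{i\in V}s_i\ge(n_VT-B)^2\big/\bigl(2nT-\tfrac{n^2}{n_V}-\sum_ib_i^2\bigr)$ with $B=\sum_ib_i\le n$, and minimising (using $\sum_ib_i^2\ge B^2/n_V$ and then $B=n$) does yield $S\ge\tfrac{n_V(n_VT-n)}{2n}\ge\tfrac{(\ell-1)n_V}{2}$. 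One caution: the term $\tfrac12\sum_ib_i^2$ must be retained through this step --- discarding it after the first Cauchy--Schwarz leaves only $S\ge n_V\tfrac{(\ell-1)^2}{2\ell-1}$, which is strictly below $\tfrac{\ell-1}{2}n_V$ --- so the ``bookkeeping'' is slightly less routine than your sketch suggests. The delicate point you flag, that $P(t)\le n$ throughout, is exactly the assertion the paper also makes (in the form $p_t(\mathrm{av})\le n/n_V$) and justifies the same way from the stopping rule, so your treatment there is aligned. On balance your route is somewhat more elementary (an area argument plus two applications of Cauchy--Schwarz, with no per-event $\Delta_\phi$ computation), while the paper's variance potential has the advantage of isolating a clean intermediate statement --- an explicit bound on the average price paid per representative --- that is reusable on its own.
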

\begin{proof}
In the proof we will be using the alternative definition of the Phragm\'{e}n's Sequential Rule, using the concept of virtual money (credits).

Let $W$ be a committee returned by Phragm\'{e}n's Sequential Rule.
For the sake of contradiction, let us assume that there exists an $\ell$-large group of voters $V$ with $|\bigcap_{i \in V} A(i)| \geq \frac{\ell-1}{2}$ and with the average satisfaction lower than $\frac{\ell-1}{2}$. We set $n_V = |V|$. 

Observe that the rule stops after at least $k$ time units. Indeed, the total amount of credits earned by all the voters in the first $k$ time units is equal to $kn$. This allows to buy at most $k$ candidates. 

The total amount of credits collected by the voters from $V$ in the first $k$ time units is equal to~$n_V k$. The number of credits left to $V$ after the whole committee is elected is at most equal to $n$ (otherwise, these credits would have been spent earlier for buying an additional candidate approved by all the voters from $V$; such a candidate would exist as $|\bigcap_{i \in V} A(i)| \geq \frac{\ell-1}{2}$ and $\mathrm{sat}_V(A, W) < \frac{\ell-1}{2}$). Thus, the voters from $V$ spent at least $n_V k - n$ credits for buying candidates they approve.

We now move to the central argument of the proof---we will estimate how much on average a voter from $V$ pays for buying a committee member. We analyze the following potential function.

Let $p_t(i)$ denote the number of credits held by voter $i \in V$ at time $t$; further, let $p_t(\mathrm{av}) = \frac{1}{n_V}\sum_{i \in V} p_t(i)$. For a time $t$ we define the potential value $\phi_t$ as:
\begin{align*}
\phi_t = \sum_{i \in V} \big(p_t(i) - p_t(\mathrm{av}) \big)^2\text{.} 
\end{align*}
We will now analyze how the potential value changes over time. First, observe that the potential value remains unchanged when the number of credits of each voter is incremented (earning credits does not change the potential value). Next, we analyze what happens when the voters use their credits to pay for the committee members that they approve. Consider the time moment $t$ when a new committee member $c$ is selected. The voters who approve $c$ pay for her with their credits; furthermore, a voter who pays uses all her available credits. We consider these voters separately, one by one. Consider a voter $j \in V$ and assume that her number of credits decreases to 0 (since we consider the voters separately, the number of credits held by each other voter remains unchanged). In such a case the average $p_t(\mathrm{av})$ decreases by $\frac{p_t(j)}{n_V}$. We assess the change in the potential value:
\begin{align*} 
\Delta_{\phi} &= \sum_{i \in V, i \neq j} \left(p_t(i) - \left(p_t(\mathrm{av}) - \frac{p_t(j)}{n_V}\right) \right)^2 + \left(0 - \left(p_t(\mathrm{av}) - \frac{p_t(j)}{n_V}\right)\right)^2 \\
              &\qquad - \sum_{i \in V} \big(p_t(i) - p_t(\mathrm{av}) \big)^2 \\
              &= \sum_{i \in V} \left(p_t(i) - \left(p_t(\mathrm{av}) - \frac{p_t(j)}{n_V}\right) \right)^2 - \sum_{i \in V} \big(p_t(i) - p_t(\mathrm{av}) \big)^2 \\
              &\qquad + \left(p_t(\mathrm{av}) - \frac{p_t(j)}{n_V}\right)^2 - \left(p_t(j) - \left(p_t(\mathrm{av}) - \frac{p_t(j)}{n_V}\right)\right)^2 \\
              &= \sum_{i \in V} \frac{p_t(j)}{n_V} \left( 2p_t(i) - 2p_t(\mathrm{av}) + \frac{p_t(j)}{n_V} \right) - p_t(j)^2 + 2p_t(j)\left(p_t(\mathrm{av}) - \frac{p_t(j)}{n_V}\right)
\end{align*} 
Now, observe that $\sum_{i \in V} \left( 2p_t(i) - 2p_t(\mathrm{av})\right) = 0$, thus:
\begin{align*}
\Delta_{\phi} &=  \frac{p_t(j)^2}{n_V} + p_t(j)\left(2p_t(\mathrm{av}) - \frac{2p_t(j)}{n_V}- p_t(j)\right) \\
              &= p_t(j)\left(2p_t(\mathrm{av}) - \frac{p_t(j)(n_V + 1)}{n_V}\right) \text{.}
\end{align*} 
We further observe that in each time $t$ we have $p_t(\mathrm{av}) \leq \frac{n}{n_V}\leq \frac{k}{\ell}$ as otherwise the sum of credits within the group would be greater than $n$, and such credits would be earlier spent on buying a committee member who is approved by all the voters within the group. 

Let us now interpret the above calculations. Intuitively, we will argue that a voter from $V$ will on average pay at most $\frac{2n_V}{n_V+1}\cdot\frac{k}{\ell}$ for a committee member. Let us set $x_{t, j} = p_t(j) - \frac{2n_V}{n_V+1}\cdot\frac{k}{\ell}$. Then,
\begin{align*}
\Delta_{\phi} &= p_t(j)\left(2p_t(\mathrm{av}) - 2\frac{k}{\ell} - \frac{x_{t, j}(n_V + 1)}{n_V}\right) \leq -\frac{x_{t, j}(n_V + 1)}{n_V} \cdot p_t(j) \text{.}
\end{align*}
If $x_{t, j} > 0$, then $\phi$ decreases by at least $2|x_{t, j}| \cdot \frac{k}{\ell}$. Similarly, if $x_{t, j} \leq 0$, then $\phi$ increases by at most $2|x_{t, j}| \cdot \frac{k}{\ell}$. Since the potential value is always non-negative, we infer that the values of $x_{t', j'}$ for $j' \in V$ and $t'$ are on average lower than or equal to $0$. Recall that $p_t(j)$ in the definition of $x_{t, j}$ is equal to how much voter $j$ pays for the selected candidate. Thus, the voters from $V$ on average pay at most $\frac{2n_V}{n_V + 1}\cdot\frac{k}{\ell} < \frac{2k}{\ell}$ for a committee member. Consequently, the average number of committee members that the voters from $V$ approve equals at least:
\begin{align*}
\frac{n_V k - n}{\frac{2k}{\ell} \cdot n_V} = \frac{k}{\frac{2k}{\ell}} - \frac{n}{\frac{2k}{\ell} \cdot n_V} \geq \frac{\ell}{2} - \frac{n}{\frac{2k}{\ell} \cdot \frac{n\ell}{k}}  = \frac{\ell - 1}{2} \text{.}
\end{align*}
This leads to a contradiction and so completes the proof.
\end{proof}

\cref{prop:phrag_hard_instance} below upper-bounds the proportionality degree of Phragm\'{e}n's Sequential Rule. For the sake of simplicity we present the proof for the case when $k$ is divisible by $\ell$; an analogous construction holds also in the general case (with slightly worse, but asymptotically the same, bounds), but the analysis is more complex.

\begin{proposition}\label{prop:phrag_hard_instance}
The $k$-proportionality degree of Phragm\'{e}n's Sequential Rule satisfies the following inequality: For each $k, \ell \in \naturals$, $\ell < \frac{k}{2}$ and $k$ divisible by $\ell$ we have $d_{\mathrm{Phrag}}(\ell, k) \leq \frac{\ell}{2} \cdot \frac{2k - 2\ell + 2}{2k - 3\ell}$.
\end{proposition}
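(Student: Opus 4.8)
The plan is to build, for every $k$ and $\ell$ meeting the hypotheses, one election witnessing that no proportionality guarantee of value larger than $\frac{\ell}{2}\cdot\frac{2k-2\ell+2}{2k-3\ell}$ can hold. Concretely, I would exhibit a profile $A$ together with an $\ell$-large group $V$ and a Phragm\'{e}n-winning committee $W$ such that (i) $\big|\bigcap_{i\in V}A(i)\big|$ can be made as large as we please, while (ii) $\mathrm{sat}_V(A,W)\le\frac{\ell}{2}\cdot\frac{2k-2\ell+2}{2k-3\ell}$. Given such a family, \Cref{def:proportionality_guarantee} immediately forces the claimed bound: a guarantee $g$ with $g(\ell,k)$ exceeding our value would have its hypothesis satisfied (take enough common candidates) but its conclusion violated. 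As in the proof of \Cref{thm:phrag_guarantee}, I would argue in the ``virtual money'' model: voters accumulate one credit per time unit, a candidate costs $n$ credits and, when bought, zeroes the credits of all its approvers.

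For the construction I would take $V$ of the minimal $\ell$-large size $|V|=\ell\cdot\frac nk$, fix an arbitrarily large family of \emph{common} candidates approved by every voter of $V$, and split $V$ into two (more generally, a few) equal parts. The remaining ingredients are auxiliary voters and candidates, added in large numbers ($n$ is free), whose only role is to control the order in which candidates become electable. They are arranged so that the run proceeds as follows: the parts of $V$ are ``drained'' on a fixed alternating schedule; at each moment at which $V$'s total credit would reach the threshold $n$, the next part in the schedule --- carrying only a bounded fraction of that threshold, with auxiliary voters supplying the rest --- becomes electable exactly at that instant, so Phragm\'{e}n takes a candidate approved by that part together with some auxiliary voters, ties resolved in its favour; the other committee seats are filled, at other moments, by purely auxiliary candidates. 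Two invariants must be maintained: $\sum_{i\in V}p_t(i)\le n$ at all times, so that \emph{no} common candidate is ever bought (this is what keeps $V$'s satisfaction low, independently of how many common candidates there are), and the auxiliary electorate must regenerate exactly the amounts of credit needed, exactly when they are needed.

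The accounting is then short. The run lasts at least $k$ time units (buying $k$ candidates costs $kn$ credits) and all seats are filled within that time, so the run lasts essentially $k$ time units; hence $V$ earns about $|V|\cdot k=\ell n$ credits, hoards at most $n$, and therefore spends essentially all of them --- and it spends them only on the ``part'' candidates. Because each such purchase resets only one part of $V$ (handing a new representative to just those voters) while extracting from it close to the largest per-purchase amount the threshold permits, one computes that each voter of $V$ is reset --- hence gains a representative --- only about $\frac{\ell}{2}$ times; pushing the schedule as far as the seat budget allows and then carefully tallying the atypical first and last phases yields the exact bound $\frac{\ell}{2}\cdot\frac{2k-2\ell+2}{2k-3\ell}$. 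The factor $\frac{2k-2\ell+2}{2k-3\ell}>1$ is precisely the overhead of these boundary phases and of the imperfect alignment of the parts, and it tends to $1$ as $k\to\infty$, matching the asymptotic $\frac\ell2$.

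The hard part is making the construction actually behave as described. One must choose the number and reset-schedule of the auxiliary groups so that (a) at every critical instant the intended ``part'' candidate --- and nothing else, or else something that loses the tie --- is electable; (b) the auxiliary voters hold exactly the right credit totals at those instants and never push $\sum_{i\in V}p_t(i)$ over $n$; and (c) the intervening seats can be absorbed by auxiliary candidates without perturbing $V$'s credit trajectory. Verifying that a consistent choice of all these data exists --- which is where $n$ must be taken large, and where $\ell\mid k$ together with $\ell<\frac k2$ is used to keep part-sizes, phase counts, and credit amounts integral and to make the schedule fit into $k$ time units --- is the bulk of the argument. Pinning down the exact constant rather than just the $\ell/2$ asymptotics, i.e.\ controlling the transient phases precisely, is the most delicate piece of the bookkeeping, and it is also why the non-divisible case yields only an asymptotically equivalent (slightly weaker) bound.
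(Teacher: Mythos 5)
Your plan is the same construction strategy the paper uses: a minimal $\ell$-large group $V$ split into rotating parts, each part drained by a candidate whose cost is mostly covered by voters outside $V$, disjoint auxiliary candidates consuming the remaining seats, and the invariant that $V$'s pooled credits never reach $n$ so that none of the commonly approved candidates is ever bought. In the paper's instantiation $V=\{1,\ldots,L\}$ with $L=n\ell/k$ is split into $x=\frac{2(k-\ell)}{\ell}-1$ parts of size $L/x$; the part-candidates $b_1,\ldots,b_k$ each pair one part with roughly $n-2L$ outside voters and are elected at regular intervals $t=\frac{k}{k-\ell}$, while the candidates $c_1,\ldots,c_k$, approved by a block of $L$ voters disjoint from $V$, absorb about $\ell-1$ seats; the bound $\frac{\ell}{2}\cdot\frac{2k-2\ell+2}{2k-3\ell}$ is then exactly $\frac{k-\ell+1}{x}$, i.e.\ (seats won by part-candidates) divided by (number of parts). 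So the idea is right, but everything that makes this a proof --- the choice of $x$, the timing, the verification that Phragm\'{e}n's rule actually elects the candidates in the intended order, and the count of how many seats the auxiliary candidates steal --- is precisely what you defer to ``the bulk of the argument,'' and it is also the only place the stated constant can come from; as written, the proposal establishes the $\nicefrac{\ell}{2}$ asymptotics in spirit but does not yet justify the exact factor $\frac{2k-2\ell+2}{2k-3\ell}$. One small simplification available to you: since no guarantee can exceed $\ell-1$, it suffices to give the group $\ell$ common candidates rather than ``arbitrarily many.''
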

\begin{proof}
Let us fix two natural numbers, $\ell, k \in \naturals$; $\ell < \frac{k}{2}$ and $k$ is divisible by $\ell$. Let $x = \frac{2(k-\ell)}{\ell} - 1$ and let $L$ be an integer divisible by $\ell$, $x$, and by $k$. 
We construct a profile with $n = \frac{Lk}{\ell}$ voters and $2k + \ell$ candidates $C= \{b_1, \ldots, b_k, c_1, \ldots, c_k, d_1, \ldots, d_{\ell}\}$, as follows:
\begin{align*}
&N(b_1) = \{1, 2, \ldots, \nicefrac{L}{x}\}  \cup  \{L + 1, L + 2, \ldots, n - \nicefrac{L}{x}\}, \\
&N(b_2) = \{\nicefrac{L}{x}+1, \ldots, \nicefrac{2L}{x}\}  \cup  \{L + 1, L + 2, \ldots, n - \nicefrac{2L}{x}\}, \\
&\ldots \\
&N(b_{x}) = \{L-\nicefrac{L}{x} + 1, \ldots, L\} \cup \{L + 1, \ldots, n - L\}, \\
&N(b_{x+1}) = \{1, 2, \ldots, \nicefrac{L}{x} \} \cup \{L + 1, \ldots, n - L\}, \\
&N(b_{x+2}) = \{\nicefrac{L}{x}+1, \ldots, \nicefrac{2L}{x}\} \cup \{L + 1, \ldots, n - L\}, \\
&\ldots \\
&N(b_{k}) = \big(\text{a cyclic shift of~}\{1, 2, \ldots, \nicefrac{L}{x} \}\big) \cup \{L + 1, \ldots, n - L\} \\
&N(c_1) = \ldots = N(c_{k}) = \{n - L, n - L + 1, \ldots, n\} \\
&N(d_1) = \ldots = N(d_{\ell}) = \{1 , \ldots, L\}.
\end{align*}
In particular, candidate $b_1$ is approved by $n-L$ voters in total, $b_2$ is approved by $n-L -\nicefrac{L}{x}$ voters, and $b_3$ by $n-L -\nicefrac{2L}{x}$ voters. Each candidate from $\{b_x, \ldots, b_k\}$ is approved by the same $n - 2L$ voters from $\{L + 1, \ldots, n\}$ and by some $\nicefrac{L}{x}$ voters from $\{1, \ldots, L\}$, which cyclically shift. The voters from  $\{1, \ldots, L\}$ who approve $b_i$, $i\geq x$ are those who are right after the voters who approve $b_{i-1}$-th candidate, unless those who approve $b_{i-1}$ form the last segment, i.e., $\{L-\nicefrac{L}{x} + 1, \ldots, L\}$. In such a case, the voters from  $\{1, \ldots, L\}$ who approve $b_{i}$ are exactly $\{1, 2, \ldots, \nicefrac{L}{x}\}$.

Let $t = \frac{k}{k -\ell}$; this ensures that $t(n-L) = \frac{k}{k -\ell}\left(n - n\frac{\ell}{k}\right) = n$. In our profile Phragm\'{e}n's Sequential Rule selects $b_1$ first at time $t$. Indeed, at this time the group of voters $N(b_1)$ collects $(n - L)t = n$ credits. At time $2t$ each voter from $\{\nicefrac{L}{x}+1, \ldots, \nicefrac{2L}{x}\}$ has already $2t$ credits, and each voter from $\{L + 1, L + 2, \ldots, n - \nicefrac{2L}{x}\}$ has $t$ credits; altogether, they have $n$ credits, thus $b_2$ is selected second. By a similar reasoning, we infer that in the first $x$ steps candidates $b_1, \ldots, b_x$ will be selected by Phragm\'{e}n's Sequential Rule, and the last one of them will be selected at time $xt$. At this time the rule would also select one candidate from $\{c_1, \ldots, c_k\}$. Indeed, at time $xt$ the voters from $n - L, n - L + 1, \ldots, n$ have the following number of credits:
\begin{align*}
\frac{L}{x}\cdot t \cdot (1 + 2 + \ldots + x) = \frac{L}{x} \cdot \frac{k}{k -\ell} \cdot \frac{x(x+1)}{2} = \frac{n\ell(x+1)}{2(k - \ell)} = n \text{.}
\end{align*}

Let us now analyze what happens in the next steps. First, let us consider how the Phragm\'{e}n's Sequential Rule would behave if there were no candidates from $\{c_1, \ldots, c_k\}$.
At time $(x+1)t$ voters from $\{1, 2, \ldots, \nicefrac{L}{x}\}$ have $xt$ credits each. Similarly, each voter from $\{\nicefrac{L}{x} + 1, \ldots, \nicefrac{2L}{x}\}$ has $(x-1)t$ credits, each voter from $\{\nicefrac{2L}{x} + 1, \ldots, \nicefrac{3L}{x}\}$ has $(x-2)t$ credits, etc. The amount of credits held by the voters from  \{1, 2, \ldots, L\} altogether at time $(x+1)t$ is:
\begin{align*}
\frac{L}{x}\cdot t_1 \cdot (1 + 2 + \ldots + x) =  n \text{.}
\end{align*}
At the same time voters from $N(b_{x+1})$ have $n$ credits altogether, thus, $b_{x+1}$ can be selected next, before any candidate from $D = \{d_1, \ldots, d_k\}$ is chosen\footnote{Here we assume adversarial tie breaking, yet the construction can be strengthen so that it does not depend on the particular tie-breaking mechanism.}. Similarly, at time $(x+2)t$ each voter from $\{1, 2, \ldots, \nicefrac{L}{x}\}$ has $t$ credits, each from $\{\nicefrac{L}{x}+1, \ldots, \nicefrac{2L}{x}\}$ has $xt$ credits, each from $\{\nicefrac{2L}{x}+1, \ldots, \nicefrac{3L}{x}\}$ has $(x-1)t$ credits, etc.; altogether they have at most $n$ credits, and the voters from $N(b_{x+2})$ have exactly $n$; thus $b_{x+2}$ can be selected next. Through a similar reasoning we conclude that in the first $k$ steps Phragm\'{e}n's Sequential Rule would select candidates $\{b_1, \ldots, b_{k}\}$.

Now, consider the candidates from $\{c_1, \ldots, c_k\}$. After candidates $b_1, \ldots, b_x$ are selected, candidates from $\{c_1, \ldots, c_k\}$ are approved only by voters who do not approve other remaining candidates. Thus, their selection does not interfere with the relative order of selecting the other candidates. Further, observe that over time $(x+1)t$ the last $L$ voters collect the following number of credits:
\begin{align*} 
(x+1)tL = \frac{2(k-\ell)}{\ell} \cdot \frac{k}{k -\ell} \cdot n\frac{\ell}{k} = 2n \text{.}
\end{align*}
Thus, every $\frac{(x+1)t}{2}$ time moments the rule will select one candidate from $\{c_1, \ldots, c_k\}$. Consequently, in the first $t(k - \left\lfloor \frac{2k - 2x}{x+ 3} \right\rfloor)$ time moments, the rule will select at least $\left\lfloor \frac{2k - 2x}{x+ 3} \right\rfloor + 1$ candidates from $\{c_1, \ldots, c_k\}$. Indeed, the first candidate will be selected after the first $xt$ time moments. After the remaining $t(k - x - \lfloor \frac{2k - 2x}{x+ 3} \rfloor)$ time moments the number of candidates from $\{c_1, \ldots, c_k\}$ that will be selected is equal to:
\begin{align*} 
\frac{t(k - x - \lfloor \frac{2k - 2x}{x+ 3} \rfloor)}{\frac{(x+1)t}{2}} \geq  \frac{2(k - x - \frac{2k - 2x}{x+ 3})}{x+1} = \frac{2k - 2x}{x+ 3} \geq \left\lfloor \frac{2k - 2x}{x+ 3} \right\rfloor \text{.}
\end{align*}
Next, observe that:
\begin{align*} 
\left\lfloor \frac{2k - 2x}{x+ 3} \right\rfloor + 1 \geq \left\lfloor \frac{2k - \frac{4k - 6\ell}{\ell}}{\frac{2k - 3\ell}{\ell} + 3}\right\rfloor + 1 = \left\lfloor \frac{2k\ell - 4k + 6\ell}{2k} \right\rfloor \geq \ell - 1 \text{.}
\end{align*}
Consequently, the winning committee $W$ has at most $k - \ell + 1$ candidates from $\{b_1, \ldots, b_{k-1}, b_k\}$ and the other committee members are from $\{c_1, \ldots, c_{k-1}, c_k\}$.

The group of voters $V = \{1, \ldots, L\}$ is $\ell$-cohesive. Let us assess their average number of representatives. Observe that, except for candidates from $\{c_1, \ldots, c_{k-1}, c_k\}$, each candidate from the selected committee is approved by exactly $\frac{L}{x}$ voters of $V$. Thus:
\begin{align*}
\frac{1}{|V|}\sum_{i \in V} |W \cap A(i)| = \frac{1}{L} \cdot \frac{L}{x} \cdot \left(k - \ell + 1\right) = \frac{\ell}{2} \cdot \frac{2k - 2\ell + 2}{2k - 3\ell} \text{.}
\end{align*}
\end{proof}

Next, observe that for large enough $k$ the proportionality guarantee from \Cref{prop:phrag_hard_instance} can be arbitrarily close to $\nicefrac{\ell}{2}$. Thus, we obtain the following corollary, which shows that the guarantee from \cref{thm:phrag_guarantee} is almost tight (up to an additive constant of $\nicefrac{1}{2}$).

\begin{corollary}\label{cor:phragmen_tight}
The proportionality degree of Phragm\'{e}n's Seq. Rule satisfies $d_{\mathrm{Phrag}}(\ell) \leq \frac{\ell}{2}$.
\end{corollary}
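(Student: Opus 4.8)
The plan is to derive the corollary directly from \Cref{prop:phrag_hard_instance} by letting the committee size grow. Recall that by definition $d_{\mathrm{Phrag}}(\ell) = \min_{k} d_{\mathrm{Phrag}}(\ell, k)$, so in particular $d_{\mathrm{Phrag}}(\ell) \le d_{\mathrm{Phrag}}(\ell, k)$ for every $k$ for which the right-hand side makes sense. First I would restrict attention to committee sizes $k$ that are multiples of $\ell$ with $\ell < \frac{k}{2}$; there are infinitely many such $k$, and \Cref{prop:phrag_hard_instance} applies to each of them, giving
\begin{align*}
d_{\mathrm{Phrag}}(\ell) \;\le\; d_{\mathrm{Phrag}}(\ell, k) \;\le\; \frac{\ell}{2}\cdot\frac{2k - 2\ell + 2}{2k - 3\ell}\text{.}
\end{align*}

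The second step is a one-line limit computation. As $k \to \infty$ through multiples of $\ell$, the correction factor $\frac{2k - 2\ell + 2}{2k - 3\ell}$ tends to $1$: its numerator exceeds its denominator by the fixed amount $\ell + 2$, which is negligible compared with the $2k$ in the denominator. Since the left-hand side $d_{\mathrm{Phrag}}(\ell)$ does not depend on $k$, while the right-hand side can be made arbitrarily close to $\frac{\ell}{2}$ from above, we conclude $d_{\mathrm{Phrag}}(\ell) \le \frac{\ell}{2}$.

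There is essentially no obstacle here; all the content sits in \Cref{prop:phrag_hard_instance}, and the corollary merely records that its bound degrades to the clean value $\frac{\ell}{2}$ in the regime of large committees. The one point worth stating carefully is that $d_{\mathrm{Phrag}}(\ell)$ is the infimum over all $k$, so it suffices to exhibit a single sequence of witnessing instances with $k \to \infty$ — one does not need a matching bound for every fixed $k$ simultaneously. Combined with \Cref{thm:phrag_guarantee}, this pins down the proportionality degree of Phragm\'{e}n's Sequential Rule to $\frac{\ell}{2} \pm \frac{1}{2}$.
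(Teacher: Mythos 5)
Your argument is correct and is exactly the paper's: the paper derives the corollary in one sentence by observing that the bound of \Cref{prop:phrag_hard_instance} can be made arbitrarily close to $\nicefrac{\ell}{2}$ for large $k$, which is precisely your limit computation combined with $d_{\mathrm{Phrag}}(\ell) \leq d_{\mathrm{Phrag}}(\ell, k)$. Nothing further is needed.
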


Let us now discuss the consequences of our results from the perspective of a decision maker facing the problem of choosing the right rule. First, Phragm\'{e}n's Sequential Rule offers a considerably lower proportionality guarantee than PAV, hence the latter should be recommended whenever proportionality is the primary concern. Second, the worst-case loss of proportionality for the Phragm\'{e}n's Sequential Rule  is moderate. Thus, in some cases this loss can be compensated by other appealing properties of the rules. For example, Phragm\'{e}n's Sequential Rule satisfies committee enlargement monotonicity\footnote{Informally speaking, a rule $\calR$ satisfies committee enlargement monotonicity (sometimes, also referred to as simply committee monotonicity) if increasing the committee size from $k$ to $k+1$ results only in adding an additional candidate to the winning committees (the candidates that were selected by the rule for the committee size equal to $k$ are still selected for the committee size equal to $k+1$). For the precise definition, see~\cite{elk-fal-sko-sli:c:multiwinner-rules}.}, which makes it applicable when the goal is to compute a representative \emph{ranking} of alternatives (the recent work of Skowron~et~al.~\shortcite{proprank} discusses several domains where finding proportional rankings is critical). Further, as discussed by Janson~\shortcite{Janson16arxiv} (see, e.g., Examples 13.5 there), Phragm\'{e}n's Sequential Rule has the following appealing property:

\begin{definition}[Strong Unanimity]
An ABC rule $\calR$ satisfies \emph{strong unanimity} if for each approval-based profile $A$ such that there exists a candidate $c$ who is approved by all the voters, it holds that $\calR(A, k) = \calR(A^{-c}, k-1) \cup \{c\}$, where $A^{-c}$ denotes the profile obtained from $A$ by removing $c$ from the approval sets of all the voters.
\end{definition}

It easily follows from the definition that Phragm\'{e}n's Sequential Rule satisfies strong unanimity. Further, this property is so natural, that it is quite surprising that PAV does not satisfy it (this is an argument often raised by the critics of PAV, e.g., by Phragm\'{e}n in his original works). \Cref{thm:phrag_guarantee} shows that strong unanimity and committee enlargement monotonicity can be satisfied with only moderate loss of proportionality compared to PAV. As such, it allows to view Phragm\'{e}n's Sequential Rule as an appealing alternative for PAV (depending on how important the decision maker considers particular axioms).

\subsection{Comparing Phragm\'{e}n's Sequential Rule and Phragm\'{e}n's Maximal Rule}

Phragm\'{e}n's Maximal Rule was first introduced as an ``optimal'' alternative for its sequential counterpart. Thus, it is interesting to see that the global optimization embedded in the definition of the rule, does not translate to its better properties. In fact, it appears that Phragm\'{e}n's Maximal Rule is far from being optimal (it is not Pareto optimal and its utilitarian efficiency---the notion which we will explain in detail in \Cref{sec:utilitarian_efficiency}---is much worse than those of Phragm\'{e}n's Sequential Rule)~\cite{lac-sko:quantitative}. Below, we will complement this arguments showing that the proportionality degree of the rule is much worse than those of its sequential counterpart. Our results suggest that there is no reason to choose Phragm\'{e}n's Maximal Rule over Phragm\'{e}n's Sequential Rule.

\begin{proposition}\label{prop:max_phrag_guarantee}
The proportionality degree of Phragm\'{e}n Maximal Rule is $d_{\mathrm{max\text{-}Phrag}}(\ell) \leq 1$.
\end{proposition}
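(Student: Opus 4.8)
The plan is to prove the bound by producing a single witnessing instance rather than reasoning about the whole rule abstractly. Concretely, I would construct, for every $\ell$ and for the committee size $k=\ell+1$, a profile $A$ together with an $\ell$-large group $V$ such that $|\bigcap_{i\in V}A(i)|\geq \ell-1$ and such that Phragm\'{e}n's Maximal Rule has a winning committee $W$ with $\sat_V(A,W)\leq 1$. This suffices: by the lower bound of Aziz et al.\ recalled just before \Cref{thm:pav_prop_degree}, \emph{every} $k$-proportionality guarantee $g$ of any ABC rule satisfies $g(\ell,k)\leq \ell-1$, so for our instance the hypothesis $|\bigcap_{i\in V}A(i)|\geq g(\ell,k)$ of \Cref{def:proportionality_guarantee} holds, and the definition then forces $g(\ell,k)\leq \sat_V(A,W)\leq 1$. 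As this holds for every guarantee of the rule, $d_{\mathrm{max\text{-}Phrag}}(\ell,k)\leq 1$, and hence $d_{\mathrm{max\text{-}Phrag}}(\ell)=\min_{k}d_{\mathrm{max\text{-}Phrag}}(\ell,k)\leq 1$.

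For the construction I would take $n=\ell+1$ voters. The group $V=\{v_1,\dots,v_\ell\}$ consists of $\ell$ of them, and voter $v_j$ approves exactly $\{d_1,\dots,d_\ell,x_j\}$, where $d_1,\dots,d_\ell$ are ``shared'' candidates approved by all of $V$ and by nobody else, and $x_1,\dots,x_\ell$ are ``private'' candidates with $N(x_j)=\{v_j\}$. The remaining voter $u$ approves a single ``filler'' candidate $e$ with $N(e)=\{u\}$. Since $k=\ell+1$, we get $|V|=\ell=\ell\cdot n/k$, so $V$ is $\ell$-large, and $\bigcap_{i\in V}A(i)=\{d_1,\dots,d_\ell\}$ has size $\ell\geq\ell-1$, as required.

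I would then analyse Phragm\'{e}n's Maximal Rule on $A$ via its load-balancing formulation. The key observation is that every candidate except $e$ is approved only by voters of $V$, so all load carried by a committee's non-$e$ members must sit on the $\ell$ voters $v_1,\dots,v_\ell$. A size-$k$ committee omitting $e$ therefore puts $k=\ell+1$ units of load on these $\ell$ voters, forcing some voter to load at least $(\ell+1)/\ell>1$; whereas a committee containing $e$ together with exactly $\ell$ of the candidates approved by $V$ can be balanced at maximal load $1$ (put each chosen $x_j$'s unit on $v_j$, and distribute the units of the chosen $d_i$'s one-per-voter among the remaining members of $V$, while $e$'s unit sits on $u$). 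Also, $e$ alone forces load $1$ on $u$, so no committee beats maximal load $1$. Hence the minimum attainable maximal load is exactly $1$, and the winning committees are precisely the sets $\{e\}\cup D$ with $D$ an arbitrary $\ell$-element subset of $\{d_1,\dots,d_\ell,x_1,\dots,x_\ell\}$. Taking $W=\{e,x_1,\dots,x_\ell\}$ gives a winning committee in which $v_j$ approves exactly one member, namely $x_j$, so $\sat_V(A,W)=1$, which completes the argument.

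The substantive step to verify carefully is exactly this characterisation of the minimiser: that maximal load $1$ is attainable, that it cannot be beaten, and that the ``anti-proportional'' committee $\{e,x_1,\dots,x_\ell\}$ lies among the minimisers. All three follow from the elementary load-counting above, together with the remark that swapping a shared candidate $d_i$ for a private candidate $x_j$ never raises the maximal load and only lowers the representation of a sub-block of $V$. As in the proof of \Cref{prop:phrag_hard_instance}, this selects the bad committee by adversarial tie-breaking among the maximal-load minimisers, which is legitimate because \Cref{def:proportionality_guarantee} quantifies over \emph{all} winning committees; I expect this tie-breaking dependence, rather than any hard estimate, to be the only delicate aspect of the write-up.
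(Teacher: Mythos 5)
Your proposal is correct and follows essentially the same strategy as the paper: exhibit an explicit profile in which a cohesive $\ell$-large group approves $\ell$ common candidates, yet a perfectly load-balanced committee consisting of ``private'' candidates is among the minimizers of the maximal load, giving the group average satisfaction exactly $1$. The paper's construction is a slightly simpler variant ($k$ disjoint voter groups, $k$ universally approved candidates versus $k$ group-specific ones, with both committees tying at maximal load $\nicefrac{k}{n}$), but the key idea---that Phragm\'{e}n's Maximal Rule is indifferent between shared and private candidates whenever the load can be balanced---is identical, and your additional care with the filler voter, the tie-breaking, and the appeal to the $g(\ell,k)\leq\ell-1$ upper bound is sound.
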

\begin{proof}
Consider the profile constructed as follows. The voters are divided into $k$ equal-size groups: $N = N_1 \cup N_2 \cup \ldots \cup N_k$.  There are $k$ candidates approved by $N$---let us call the set of these $k$ candidates $W_1$. Further, there are $k$ candidates, $W_2 = \{c_1, \ldots c_k\}$, such that for each $i \in [k]$, $N(c_i) = N_i$. Clearly, Phragm\'{e}n Maximal Rule can select $W_2$ as this committee allows to distribute the load evenly among the voters. With $W_2$ each voter will have a single representative. On the other hand, for each $\ell$ the group $N_1 \cup \ldots \cup N_{\ell}$ is $\ell$-large and has more than $\ell$ commonly approved candidates. Intuitively, for this profile $W_1$ should be chosen (in particular, it would be the single committee returned by Phragm\'{e}n's Sequential Rule).
\end{proof}

\section{Proportionality Degree of Sequential PAV}\label{sec:seqpav}
In this section we will assess the proportionality degree of Sequential PAV, with the aim of comparing Seq-PAV with Phragm\'{e}n's Sequential Rule. Our method here is quite different from the one we used in the previous section. Specifically, instead of proving a bound on the proportionality degree of Seq-PAV directly, we will show how to construct an algorithm that given a desired committee size $k$ finds in polynomial time an upper bound on the $k$-proportionality degree of the rule. We will compute these bounds for $k \leq 200$ and will argue that they are fairly accurate. Since for many natural applications of ABC rules the intended committee size is much smaller than 200, our results give a direct answer to the decision maker who is interested in estimating the proportionality degree for Seq-PAV. Yet, our result is more generic and allows a mechanism designer to make a comparison of rules having a specific committee size in mind (indeed, a committee size is usually known and fixed before the rule is used).

Designing such an algorithm is not straightforward---the main challenge lies in reducing the size of the space that needs to be searched. Indeed, even for a fixed committee size there is an infinite number of possible profiles (even if we fix the number of voters $n$ and the number candidates $m$ we still have exponentially many, namely $2^{mn}$, possible preference profiles). Thus, we will use several observations that will allow us to reformulate the problem and to compactly represent it as an instance of Linear Programming (LP).   

\subsection{The High-Level Strategy of the Algorithm}

The high-level idea of our approach can be summarized as follows. First, in~\cref{sec:seq_pav_objective} we describe a different optimization function that accurately estimates the proportionality degree of Seq-PAV  (at first this optimization function can appear loosely related to the concept of proportionality degree). This optimization function is somehow easier to work and in \Cref{sec:first_lp} we show that it can be solved by a carefully constructed linear programming (LP): this LP takes the committee size as input and returns the proportionality degree of Seq-PAV. This algorithm is exponential in the committee size but does not depend on the number of voters nor the number of candidates, thus it allows us to derive almost exact values of the proportionality degree of Sequential PAV for small values of $k$. (Up to this point, we can formally prove that the so-constructed LP returns accurate estimations of the proportionality degree up to an additive constant of 1.)

Next, in \cref{sec:relaxed_lp} we show a more complex LP which we argue is a relaxation of the first exact one. The second LP works in polynomial time in the committee size $k$, and also its formulation not depend on $n$ nor on $m$. This LP also provides formal, analytical proportionality guarantees for Seq-PAV, thus these values can be used to formally compare Seq-PAV with other rules. Indeed, we show that the so-obtained guarantees (i.e., the lower bounds on the proportionality degree) for Seq-PAV are considerably better than the proportionality degree for the Phragm\'{e}n's Sequential Rule. While these results allow us to compare Seq-PAV with the Phragm\'{e}n's Sequential rule, we still do not know how tight are the guarantees returned by our second LP.  While the formal analysis of the second LP is much harder, we observe that for small values of $k$ the two LP-s return very similar guarantees (for $k \leq 20$ they differ by at most 2\%) and we observe that the plot of the guarantees returned by our second LP for different values of $k$ flattens very quickly (e.g., the guarantees for $k = 50$ and $k = 200$ equal to $0.7085$ and $0.694$, respectively). This allows us to conjecture that the guarantees returned by our second LP are fairly accurate, at least for reasonable committee sizes.   

Finally, in \Cref{sec:seq_pav_hardness} we discuss why obtaining analytical bounds that accurately estimate the proportionality degree of Seq-PAV for all values of $k$ is a hard task. We suggest it as an interesting and challenging open question.

\subsection{Revising the Optimization Objective}\label{sec:seq_pav_objective}

Let us introduce some additional notation. Let $A$ and $k$ be an approval-based profile and a desired committee size, respectively. Let $\calR_{\seqpav}(A, k)$ be a multiset of tied winning committees returned by Sequential Proportional Approval Voting for $A$ and $k$ (here, we use a multiset because the same committee could be obtained by different tie-breaking decisions during the execution of the rule). For each winning committee $W \in \calR_{\seqpav}(A, k)$ let $c_{\last}(A, W)$ denote the candidate that Sequential Proportional Approval Voting has added to $W$ as the last committee member. Let $\Delta_{\seqpav}(A, k, W)$ denote the average, per voter, marginal increase of the PAV score due to adding $c_{\last}(A, W)$ to $W$:
\begin{align*}
\Delta_{\seqpav}(A, k, W) = \frac{1}{n}\Big( \score_{\pav}(A, W) - \score_{\pav}(A, W\setminus\{c_{\last}(A, W)\}) \Big)\text{.}
\end{align*} 
We denote the maximum possible average marginal increase as $\Delta_{\seqpav}(A, k)$:
\begin{align*}
\Delta_{\seqpav}(A, k) \; = \; \max_{\mathclap{W \in \calR_{\seqpav}(A, k)}} \; \Delta_{\seqpav}(A, k, W) \text{.}
\end{align*} 
Further, we define $\Delta_{\seqpav}(k)$ as:
\begin{align*}
\Delta_{\seqpav}(k) = \sup_{A \in \calA} \Delta_{\seqpav}(A, k) \text{.}
\end{align*}
Our first lemma shows a close relation between the proportionality guarantee of Sequential Proportional Approval Voting and the value $\Delta_{\seqpav}(k)$. This is very useful, as the definition of $\Delta_{\seqpav}(k)$ is not based on $\ell$-large cohesive groups, and thus it is much easier to handle. 

\begin{lemma}\label{lem:seq_pav_guarantee_transformed}
The $k$-proportionality degree of Sequential PAV satisfies:
\begin{align*}
d_{\seqpav}(\ell, k) \geq \ell \cdot \frac{1}{k \cdot \Delta_{\seqpav}(k)} - 1 \text{.}
\end{align*}
For each $k \in \naturals_{+}$ the proportionality degree of Sequential PAV satisfies:
\begin{align*}
d_{\seqpav}(\ell) \leq \ell \cdot \frac{1}{k \cdot \Delta_{\seqpav}(k)} \text{.}
\end{align*} 
\end{lemma}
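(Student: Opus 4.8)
The plan is to prove the two inequalities separately. The first (the lower bound on $d_{\seqpav}(\ell,k)$) is an exchange-type argument in the spirit of the proof of \Cref{thm:pav_prop_degree}, exploiting that $\Delta_{\seqpav}(k)$ upper-bounds the average marginal PAV-score gain of the \emph{last} candidate Seq-PAV adds; the matching upper bound (which is exactly why the two expressions differ only by the additive $1$) requires an explicit family of worst-case profiles. For the lower bound, fix $k$ and set $g := \ell\cdot\frac{1}{k\,\Delta_{\seqpav}(k)}-1$. Let $A\in\calA$, let $V$ be an $\ell$-large group with $\big|\bigcap_{i\in V}A(i)\big|\ge g$, and let $W\in\calR_{\seqpav}(A,k)$; we must show $\mathrm{sat}_V(A,W)\ge g$. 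If every commonly approved candidate of $V$ lies in $W$ then $\mathrm{sat}_V(A,W)\ge\big|\bigcap_{i\in V}A(i)\big|\ge g$ and we are done, so fix $c^\ast\in\bigcap_{i\in V}A(i)$ with $c^\ast\notin W$. At the final step of Seq-PAV on $(A,k)$ the rule preferred $c_{\last}(A,W)$ to $c^\ast$, so the marginal PAV-score gain of $c_{\last}(A,W)$ relative to $W\setminus\{c_{\last}(A,W)\}$ — which by definition equals $n\,\Delta_{\seqpav}(A,k,W)\le n\,\Delta_{\seqpav}(k)$ — is at least the corresponding gain of $c^\ast$, namely $\sum_{i\in N(c^\ast)}\frac{1}{|A(i)\cap(W\setminus\{c_{\last}(A,W)\})|+1}\ge\sum_{i\in V}\frac{1}{|A(i)\cap W|+1}$ (using $V\subseteq N(c^\ast)$ and that removing one candidate cannot enlarge $|A(i)\cap W|$). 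By convexity of $x\mapsto 1/x$ (Jensen) the last sum is at least $\frac{|V|}{\mathrm{sat}_V(A,W)+1}\ge\frac{\ell n/k}{\mathrm{sat}_V(A,W)+1}$. Chaining these inequalities and rearranging gives $\mathrm{sat}_V(A,W)+1\ge\frac{\ell}{k\,\Delta_{\seqpav}(k)}$, i.e.\ $\mathrm{sat}_V(A,W)\ge g$, which proves the first claim.

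For the upper bound, fix $k$ and $\epsilon>0$. By the definition of $\Delta_{\seqpav}(k)$ there is a profile $A$ and a committee $W\in\calR_{\seqpav}(A,k)$ with $\Delta_{\seqpav}(A,k,W)>\Delta_{\seqpav}(k)-\epsilon$; list $W=\{w_1,\dots,w_k\}$ in selection order, so $w_k=c_{\last}(A,W)$, and recall that Seq-PAV's per-step marginal gains are non-increasing (the PAV-score is a submodular set function), hence every $w_j$ is added with average marginal gain at least $\Delta_{\seqpav}(A,k,W)$. From $A$ I would build an enlarged instance $A'$ on a suitable committee size $k'$ (a function of $k$ and $\ell$) by adjoining a fresh group $V$ of voters together with a private block of candidates approved by exactly the voters in $V$, choosing $|V|$ and the block size so that: $V$ is $\ell$-large in $A'$ and as cohesive as we like; Seq-PAV's run on the original candidates is essentially unchanged; and at every step the average marginal gain of each still-unselected private candidate of $V$ stays just below the prevailing threshold $\approx n'\,\Delta_{\seqpav}(k)$, so that only about $\ell/(k\,\Delta_{\seqpav}(k))$ of them ever enter the committee. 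Then every voter of $V$ ends up with roughly $\ell/(k\,\Delta_{\seqpav}(k))$ representatives while $V$ has arbitrarily many common candidates, so $d_{\seqpav}(\ell,k')<\frac{\ell}{k\,\Delta_{\seqpav}(k)}+O(\epsilon)$; since $d_{\seqpav}(\ell)=\min_{k'}d_{\seqpav}(\ell,k')$, letting $\epsilon\to0$ yields $d_{\seqpav}(\ell)\le\frac{\ell}{k\,\Delta_{\seqpav}(k)}$.

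The hard part is the upper-bound construction. One must choose $|V|$, the number of private candidates of $V$, and the committee size $k'$ — while respecting the integrality of both the ``$\ell$-large'' threshold $\ell n'/k'$ and the number of representatives a voter can hold — so that Seq-PAV, step after step, always prefers a candidate of the embedded $\Delta$-extremal instance over a private candidate of $V$, yet $V$ still accumulates exactly the intended number of representatives and the two interleaved trajectories behave as claimed. Verifying this interleaving and showing the resulting average satisfaction matches $\ell/(k\,\Delta_{\seqpav}(k))$ up to $o(1)$ is the technical crux; by comparison the lower bound is a short convexity-and-exchange computation.
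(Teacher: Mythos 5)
Your first half (the lower bound on $d_{\seqpav}(\ell,k)$) is correct and is essentially the paper's argument: the paper also pigeonholes to get a commonly approved $c^\ast\notin W$, lower-bounds its marginal gain at the final step by $|V|\cdot\frac{k}{\ell}\cdot\Delta_{\seqpav}(k)\ge n\Delta_{\seqpav}(k)$ via the same convexity observation (phrased there as ``the increase is smallest when each voter in $V$ has roughly the same number of representatives''), and contradicts the greedy choice of $c_{\last}(A,W)$. Your direct, non-contradiction phrasing and the explicit use of submodularity to pass from $W$ to $W\setminus\{c_{\last}(A,W)\}$ are fine.

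The second half has a genuine gap: you describe the properties the worst-case instance should have, but you do not construct it, and you yourself flag the interleaving analysis as the unresolved ``technical crux.'' The paper's construction resolves exactly this point, and it is worth seeing why it sidesteps the step-by-step interleaving you anticipate. Starting from a profile $A$ (with $n$ voters) and a run producing $W$ with $\Delta_{\seqpav}(A,k,W)>\Delta_{\seqpav}(k)-\epsilon_1$, the paper takes $L$ \emph{disjoint copies} of $A$ (cloning voters and candidates), adjoins $y=\frac{Ln\ell}{Lk-\ell}$ fresh voters $V$ who commonly approve $\ell$ fresh candidates, and sets $k'=Lk$; then $|V|/n'=\ell/k'$ exactly, so $V$ is $\ell$-large. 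The point of the $L$ copies is that no delicate tracking is needed: since per-step marginal gains within each copy are non-increasing, every candidate of every copy of $W$ is added with marginal gain at least $n\,\Delta_{\seqpav}(A,k,W)$, whereas if each voter of $V$ already had more than $\frac{\ell}{k\Delta_{\seqpav}(k)}+\epsilon$ representatives, the \emph{last} private candidate of $V$ that was selected would have had marginal gain at most $y\cdot\frac{k\Delta_{\seqpav}(k)}{\ell+\epsilon k\Delta_{\seqpav}(k)}$, which tends to $n\Delta_{\seqpav}(k)-\epsilon_2\frac{n\ell}{k}$ as $L\to\infty$ and hence drops below $n\,\Delta_{\seqpav}(A,k,W)$ for $L$ large and $\epsilon_1$ small. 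So that private candidate could not have been preferred to an as-yet-unselected copy of a $W$-candidate, and since the $L$ copies of $W$ already fill all $k'=Lk$ slots, some winning committee gives $V$ fewer than $\frac{\ell}{k\Delta_{\seqpav}(k)}+\epsilon$ representatives on average. Your sketch points in this direction but is missing both the cloning device (which is what makes ``the run on the original candidates is essentially unchanged'' and the threshold claim true without further argument) and the single-comparison shortcut that replaces the interleaving analysis; as written, the upper bound is not proved.
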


The proof is rather technical, thus we redelegate it to the appendix.

As an immediate corollary of \Cref{lem:seq_pav_guarantee_transformed} we obtain an almost tight estimation of the proportionality guarantee of Sequential Proportional Approval Voting.

\begin{corollary}
The proportionality guarantee of Sequential Proportional Approval Voting is $d_{\seqpav}(\ell) = \inf_{k}\frac{\ell}{k \cdot \Delta_{\seqpav}(k)} - \epsilon$ for some $\epsilon \in [0, 1]$.
\end{corollary}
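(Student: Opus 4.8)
The plan is to read this off directly from \Cref{lem:seq_pav_guarantee_transformed} by sandwiching $d_{\seqpav}(\ell)$ between two quantities that differ by exactly $1$. Throughout, write $B = \inf_{k}\frac{\ell}{k \cdot \Delta_{\seqpav}(k)}$ for the quantity appearing in the statement.

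First I would invoke the lower bound of the lemma, which gives $d_{\seqpav}(\ell, k) \geq \frac{\ell}{k \cdot \Delta_{\seqpav}(k)} - 1$ for every $k \in \naturals_{+}$. Since $d_{\seqpav}(\ell) = \inf_{k} d_{\seqpav}(\ell, k)$, taking the infimum over $k$ on both sides and pulling the additive constant out of the infimum yields $d_{\seqpav}(\ell) \geq B - 1$. Next I would use the upper bound of the lemma, which already concerns $d_{\seqpav}(\ell)$ itself: for each $k \in \naturals_{+}$ we have $d_{\seqpav}(\ell) \leq \frac{\ell}{k \cdot \Delta_{\seqpav}(k)}$. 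As the left-hand side does not depend on $k$, this family of inequalities lets us minimize the right-hand side over $k$, giving $d_{\seqpav}(\ell) \leq B$. Combining the two bounds, $B - 1 \leq d_{\seqpav}(\ell) \leq B$, so $d_{\seqpav}(\ell) = B - \epsilon$ for some $\epsilon \in [0, 1]$, which is the claim.

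There is essentially no obstacle here; the whole content sits in \Cref{lem:seq_pav_guarantee_transformed}. The only point worth flagging is the asymmetry in how the two bounds of that lemma are phrased: the lower bound is stated for the $k$-proportionality degree $d_{\seqpav}(\ell, k)$, so one must pass to the infimum over $k$ (observing that the constant $-1$ is unaffected by this operation), whereas the upper bound is stated for $d_{\seqpav}(\ell)$ directly and holds for each fixed $k$, so one simply optimizes the bound over the choice of $k$.
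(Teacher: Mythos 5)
Your proposal is correct and matches the paper's intent exactly: the paper presents this as an immediate consequence of \Cref{lem:seq_pav_guarantee_transformed}, and your sandwich argument (taking the infimum over $k$ of the lower bound, noting the additive $-1$ passes through, and minimizing the upper bound over $k$) is precisely the omitted routine verification.
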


In the remaining part of this section we will focus on assessing the expression $h_{\seqpav}(k) = k \cdot \Delta_{\seqpav}(k)$, for different values of the parameter $k$. According to \Cref{lem:seq_pav_guarantee_transformed} this expression will give us the lower bound on the $k$-proportionality degree and an upper bound on the proportionality degree of Sequential PAV.

\subsection{An Exact Linear Program for Assessing Proportionality Degree}\label{sec:first_lp}

For each committee size $k$ we can compute $h_{\seqpav}(k)$ by solving an appropriately constructed linear program; such an LP for each committee size $k$ finds a profile $A$ for which $k \cdot \Delta_{\seqpav}(A, k)$ is maximal. Designing such an LP however requires some care, since ideally its size should not depend on the number of voters nor the number of candidates. 
Our first observation is that we can consider only the profiles with $k$ candidates (all of which form a winning committee). Indeed, if we take a profile $A$ that maximizes $k \cdot \Delta_{\seqpav}(A, k)$, then we can remove from $A$ all candidates which are not members of the winning committee. After this removal we obtain a profile that still witnesses the maximality of $k \cdot \Delta_{\seqpav}(A, k)$. Thus, we will assume that in the profile that we look for there are $k$ candidates, and we will represent them as integers from $[k]$. Further, without loss of generality we will assume that the candidates are added to the winning committee in the order of their corresponding numbers: candidate $1$ is added first, candidate $2$ is added second, etc. 

Second, we observe that we do not need to represent each voter, we can rather cluster the voters into the groups having the same approval sets and for each possible approval set $T \subseteq [k]$ we are only interested in the proportion of the voters who have this approval set. Such a proportion will be denoted by the variable $x_T$. Now we can use a Linear Program formulation given by S{\'a}nchez-Fern{\'a}ndez~et~al.~\shortcite{pjr17} to compute $h_{\seqpav}(k)$.\footnote{S{\'a}nchez-Fern{\'a}ndez~et~al.~\shortcite{pjr17} used a very similar LP to compute the smallest committee size $k$ for which Seq-PAV violates JR. Our so-far analysis shows that a very similar LP can be used to obtain much stronger guarantees that lead to roughly tight estimates of the proportionality degree of Seq-PAV.}
In \Cref{fig:ilpFormulation1} we give the LP; for a logical expression $E$, we set $\mathds{1}[E]$ to be 1 if $E$ is true, and 0 otherwise.\footnote{Note that the LP is guaranteed to return a rational solution, which can be converted in an exact profile.}   

\begin{figure}[th!]
\begin{align*}
    &\text{maximize } \quad k \cdot \sum_{T \in S([k])} x_T \cdot \frac{\mathds{1}[k \in T]}{|T|}  \ & \\
    &\text{subject to: } \ & \\  
    & \text{(a)}: \sum_{T \in S([k])} x_T = 1 \ &                                                                       \ \\
    & \text{(b)}: \sum_{T \in S([k])} \frac{x_T \cdot \mathds{1}[i \in T]}{|T \cap [i]|} \geq \sum_{T \in S([k])} \frac{x_T \cdot \mathds{1}[j \in T]}{|T \cap ([i-1] \cup \{j\})|}  \ &   \quad, i,j \in [k]; j > i  \\
 & \text{(c)}: x_T \geq 0  \ &  \quad, T \in S([k]) 
\end{align*}
\caption{Linear programming (LP) formulation for computing $h_{\seqpav}(k) = k \cdot \Delta_{\seqpav}(k)$.}
\label{fig:ilpFormulation1}
\end{figure}

The expression that we maximize is exactly $k \cdot \Delta_{\seqpav}(A, k)$. Indeed, $\frac{\mathds{1}[k \in T]}{|T|}$ is the marginal increase of the PAV score coming from a voter who approves $T$, as a result of adding candidate $k$ to committee $[k-1]$. The constraint (a) ensures that the proportions of clustered voters sum up to 1. The constraint (b) ensures that in the $i$-th step of Sequential PAV, the marginal increase of the PAV score due to adding $i$ to committee $[i-1]$ is at least as large as due to adding $j > i$ to $[i-1]$. These constraints ensure that the candidates are indeed added in the order $1, 2, \ldots, k$.

We computed the above program for $k \leq 20$; the resulting lower bounds for the $k$-proportionality degree of Sequential PAV are given in \Cref{tab:seq_pav_lower_bound}. 

\begin{table}[tb!]
\begin{center}
   \begin{minipage}{.21\linewidth}
   \centering
   \begin{tabular*}{\linewidth}{@{\extracolsep{\fill}}@{}rl@{}}
    \toprule
    $k$ & lower-bound \\ 
    \midrule
    1 & 1.0 $\ell$ \\ 
    2 & 1.0 $\ell$ \\ 
    3 & 0.8888 $\ell$ \\ 
    4 & 0.8571 $\ell$ \\  
    5 & 0.8372 $\ell$ \\ 
    \bottomrule
   \end{tabular*}
   \end{minipage}%
   \hspace{0.5cm}
   \begin{minipage}{.21\linewidth}
   \centering
   \begin{tabular*}{\linewidth}{@{\extracolsep{\fill}}@{}rl@{}}
    \toprule
    $k$ & lower-bound \\ 
    \midrule  
    6 & 0.8169 $\ell$ \\ 
    7 & 0.8064 $\ell$ \\ 
    8 & 0.7979 $\ell$ \\ 
    9 & 0.7888 $\ell$ \\ 
    10 & 0.7825 $\ell$ \\ 
    \bottomrule
   \end{tabular*}
   \end{minipage} 
 \hspace{0.5cm}
   \begin{minipage}{.21\linewidth}
   \centering
   \begin{tabular*}{\linewidth}{@{\extracolsep{\fill}}@{}rl@{}}
    \toprule
    $k$ & lower-bound \\ 
    \midrule
    11 & 0.7773 $\ell$ \\ 
    12 & 0.7719 $\ell$ \\ 
    13 & 0.7684 $\ell$ \\ 
    14 & 0.7647 $\ell$ \\ 
    15 & 0.7616 $\ell$ \\ 
    \bottomrule
   \end{tabular*}
   \end{minipage} 
 \hspace{0.5cm}
   \begin{minipage}{.21\linewidth}
   \centering
   \begin{tabular*}{\linewidth}{@{\extracolsep{\fill}}@{}rl@{}}
    \toprule
    $k$ & lower-bound \\ 
    \midrule
    16 & 0.7589 $\ell$ \\ 
    17 & 0.7563 $\ell$ \\ 
    18 & 0.7540 $\ell$ \\ 
    19 & 0.7522 $\ell$ \\ 
    20 & 0.7503 $\ell$ \\ 
    \bottomrule
   \end{tabular*}
   \end{minipage} 
\end{center}
\caption{A lower-bound on the $k$-proportionality degree of Sequential PAV.}
\label{tab:seq_pav_lower_bound}

\begin{center}
   \begin{minipage}{.21\linewidth}
   \centering
   \begin{tabular*}{\linewidth}{@{\extracolsep{\fill}}@{}rl@{}}
    \toprule
    $k$ & lower-bound \\ 
    \midrule
    1 & 1.0 $\ell$ \\ 
    2 & 1.0 $\ell$ \\ 
    3 & 0.8888 $\ell$ \\ 
    4 & 0.8461 $\ell$ \\ 
    5 & 0.8307 $\ell$ \\ 
    \bottomrule 
   \end{tabular*}
   \end{minipage}%
   \hspace{0.5cm}
   \begin{minipage}{.21\linewidth}
   \centering
   \begin{tabular*}{\linewidth}{@{\extracolsep{\fill}}@{}rl@{}}
    \toprule
    $k$ & lower-bound \\ 
    \midrule
    6 & 0.8131 $\ell$ \\ 
    7 & 0.7952 $\ell$ \\ 
    8 & 0.7871 $\ell$ \\ 
    9 & 0.7771 $\ell$ \\ 
    10 & 0.7705 $\ell$ \\ 
    \bottomrule
   \end{tabular*}
   \end{minipage} 
 \hspace{0.5cm}
   \begin{minipage}{.21\linewidth}
   \centering
   \begin{tabular*}{\linewidth}{@{\extracolsep{\fill}}@{}rl@{}}
    \toprule
    $k$ & lower-bound \\ 
    \midrule
    11 & 0.7643 $\ell$ \\ 
    12 & 0.7594 $\ell$ \\ 
    13 & 0.7548 $\ell$ \\ 
    14 & 0.7512 $\ell$ \\ 
    15 & 0.7476 $\ell$ \\ 
    \bottomrule
   \end{tabular*}
   \end{minipage} 
 \hspace{0.5cm}
   \begin{minipage}{.21\linewidth}
   \centering
   \begin{tabular*}{\linewidth}{@{\extracolsep{\fill}}@{}rl@{}}
    \toprule
    $k$ & lower-bound \\ 
    \midrule
    16 & 0.7441 $\ell$ \\ 
    17 & 0.7416 $\ell$ \\ 
    18 & 0.7396 $\ell$ \\ 
    19 & 0.7371 $\ell$ \\ 
    20 & 0.7348 $\ell$ \\ 
    \bottomrule
   \end{tabular*}
   \end{minipage} 
\end{center}
\caption{A lower-bound on the $k$-proportionality degree of Sequential PAV, as computed by the Linear Program from \Cref{fig:ilpFormulation2}.}
\label{tab:seq_pav_lower_bound2}

\end{table}

Let us compare the values from \Cref{tab:seq_pav_lower_bound} with the bounds for the Phragm\'{e}n's Sequential Rule from \Cref{thm:phrag_guarantee} and \Cref{prop:phrag_hard_instance}. Since the bounds for the Phragm\'{e}n's Sequential Rule are more accurate for smaller groups of voters (when $k$ is significantly larger than $\ell$), let us consider the committee size $k = 10$, and an $\ell$-large group consisting of 10\% of the voters  (thus, $\ell = 1$). In this case, \Cref{prop:phrag_hard_instance} says that the $k$-proportionality degree of the Phragm\'{e}n's Sequential Rule cannot be higher than $0.625 \ell$, while the $k$-proportionality degree of Sequential PAV equals at least $0.7825 \ell$. This proves that Sequential PAV has better proportionality guarantees than the Phragm\'{e}n's Sequential Rule for $k = 10$; a similar comparison can be performed for $10 \leq k \leq 20$.     

\subsection{A Relaxed LP for Assessing Proportionality Degree}\label{sec:relaxed_lp}

Unfortunately, the LP from \Cref{fig:ilpFormulation1} is exponential in $k$, so we were able to provide accurate bounds only for $k \leq 20$. In the remaining part of this section we will describe another LP, that runs in polynomial time in $k$. The new LP does not compute the exact value of $h_{\seqpav}(k) = k \cdot \Delta_{\seqpav}(k)$, but rather an upper bound for $h_{\seqpav}(k)$. However, we will show that for small values of $k$ ($k \leq 20$) this lower bound is very accurate. The new LP will allow us to compute a lower-bound on the $k$-proportionality degree of Sequential PAV for $k \leq 200$; we will see that for larger values of $k$ this bound is still considerably better than the upper-bounds for the Phragm\'{e}n's Sequential Rule.

In the new LP we use the same observations as before: (i) we consider profiles with $k$ candidates only, and (ii) we cluster the voters into groups, and for each group we have a variable denoting how large it is as a fraction of all the voters. Specifically, we have the following variables. For each $i \in [k]$ we have a variable $a_i$ that denotes the fraction of voters who approve exactly $i$ from all candidates in total. For $i \in [k], j, p \in [k]_{0}$ and $p \leq i, j$, we have variable $b_{i, j, p}$ that denotes the fraction of the voters who approve $i$ candidates in total, and that after the $j$-th step of Sequential PAV approve exactly $p$ from the already selected committee members. Finally, for $i, j, p \in [k]$ and $p \leq i, j$, we have variable $c_{i, j, p}$ that denotes the fraction of voters who approve $i$ candidates in total, and that as a result of the $j$-th step of Sequential PAV the number of their representatives increased from $p-1$ to~$p$. Finally, for $j \in [k]$ we have a variable $d_j$ that denotes the increase of the PAV score after the $j$-th step of Sequential PAV, multiplied by $\nicefrac{1}{n}$.   
The new LP is given in \Cref{fig:ilpFormulation2}.

\begin{figure}[t!]
\begin{align*}
    &\text{maximize } \quad k \cdot d_k  \ & \\
    &\text{subject to: } \ & \\  
    & \text{(a1)}: a_i \geq 0  \ &  \quad, i \in [k] \\ 
    & \text{(a2)}: \sum_{i \in [k]} a_i = 1 \ &     \\
    & \text{(b1)}: b_{i, 0, 0} = b_{i, k, i} = a_i \ & \quad, i \in [k]  \\
    & \text{(b2)}: b_{i, k, p} = 0 \ & \quad, p < i  \\
    & \text{(c1)}: c_{i, j, p} \geq 0 \ & \quad, i, j, p \in [k], p\leq i, j \\
    & \text{(c2)}: c_{i, j, p} \leq b_{i, j-1, p-1} \ & \quad, i, j, p \in [k], p\leq i,j \\
    & \text{(d1)}: b_{i, j, j} = c_{i, j, j} \ & \quad, i, j \in [k], j\leq i \\
    & \text{(d2)}: b_{i, j, i} = b_{i, j-1, i} + c_{i, j, i} \ & \quad, i, j \in [k], i \leq j \\
    & \text{(d3)}: b_{i, j, 0} = b_{i, j-1, 0} - c_{i, j, 1} \ & \quad, i, j \in [k] \\
    & \text{(d4)}: b_{i, j, p} = b_{i, j-1, p} - c_{i, j, p+1} + c_{i, j, p} \ & \quad, i, j, p \in [k], p \leq i-1, j-1 \\
    & \text{(e1)}: d_{j} = \sum_{i \in [k]}\sum_{p \leq i, j} \frac{c_{i, j, p}}{p}  \ & \quad, j \in [k] \\
    & \text{(e2)}: d_{j} \geq \frac{1}{k - j + 1}\cdot \sum_{i \in [k]}\sum_{p \leq i, j-1} (i - p) \frac{b_{i, j-1, p}}{p+1}  \ & \quad, j \in [k] \\
\end{align*}
\caption{Linear programming (LP) formulation for computing a upper bound for $h_{\seqpav}(k)$.}
\label{fig:ilpFormulation2}
\end{figure}

Let us now explain the constraints in the LP. Constraints (a1)-(a2) and (c1) are intuitive. Constraint (b1)-(b2) say that before Sequential PAV starts the committee is empty (and so every voter in each group has no representatives) and that after Sequential PAV finishes, all the candidates are selected (and so every voter in each group has as many representatives, as the number of approved candidates). Constraint (c2) says that the voters whose number of representatives increased from $p-1$ to $p$ in the $j$-th step, must have had $p-1$ representatives after the $(j-1)$-th step of Sequential PAV. Constraints (d1)-(d4) give the natural relations between $b$- and $c$- variables. Constraint (e1) encodes the definition of the $d$-variables. Constraint (e2) is the most tricky, and it is an incarnation of the pigeonhole principle. Consider the expression on the right-hand side of (e2). Consider the voters who approve $i$ candidates in total, and who before the $j$-th step of Sequential PAV, approve $p$ already elected members. The fraction of such voters is $b_{i, j-1, p}$. Each such a voter approves $i - p$ not yet elected candidates. Adding each such a candidate would increase the score that each such a voter assigns to a committee by $\frac{1}{p+1}$. Thus, the sum in the right-hand side of (e2) gives the total, over all not yet selected candidates, increase of the PAV score due to adding such candidates to the current committee; the whole sum is normalized by $\nicefrac{1}{n}$. There are $k - j + 1$ not yet selected candidates. Thus, by the pigeonhole principle, adding at least one of such candidate increases the normalized PAV score by at least the value which is in the right-hand side of (e2). Since Sequential PAV selects the candidate which increases the (normalized) PAV score most, we get Constraint (e2).  

The LP from \Cref{fig:ilpFormulation2} gives a set of constraints that each preference profile must satisfy. However, some solutions to the LP do not encode any valid profile; intuitively, the LP finds a solution in a larger space than the space of all valid preference profiles. This is why it only provides an upper bound for $h_{\seqpav}(k)$, in contrast to the LP from \Cref{fig:ilpFormulation1}, which computes its exact value. Nevertheless, as we will argue, the value computed by the LP from \Cref{fig:ilpFormulation2} is accurate enough to formulate interesting conclusions. Indeed, \Cref{tab:seq_pav_lower_bound2} provides the results of the computation of the new LP for different values of $k$. One can observe that these values are very close to those from \Cref{tab:seq_pav_lower_bound}, computed by the exact LP from \Cref{fig:ilpFormulation1}. For example, for $k = 20$, these values are $0.7085$ and $0.694$, respectively. Further, in \Cref{fig:seq_pav_bounds_all}, we depict these lower-bounds for $k \leq 200$, and we observe that they are considerably better than the bounds for the Phragm\'{e}n's Sequential Rule. Additionally, by analyzing \Cref{tab:seq_pav_lower_bound} and \Cref{tab:seq_pav_lower_bound2} we observe that the differences between the values computed by the two LP-s increase with $k$; this allows us to conjecture that the actual $k$-proportionality degree of Sequential PAV can be even better than the bounds presented in~\Cref{fig:seq_pav_bounds_all} (note that the bounds from~\Cref{fig:seq_pav_bounds_all} were computed with the relaxed LP, thus these bounds are firm but possibly not optimal as if they were computed by the exact LP).

\begin{figure}[tb]
    \begin{center}
      \includegraphics[scale=0.75]{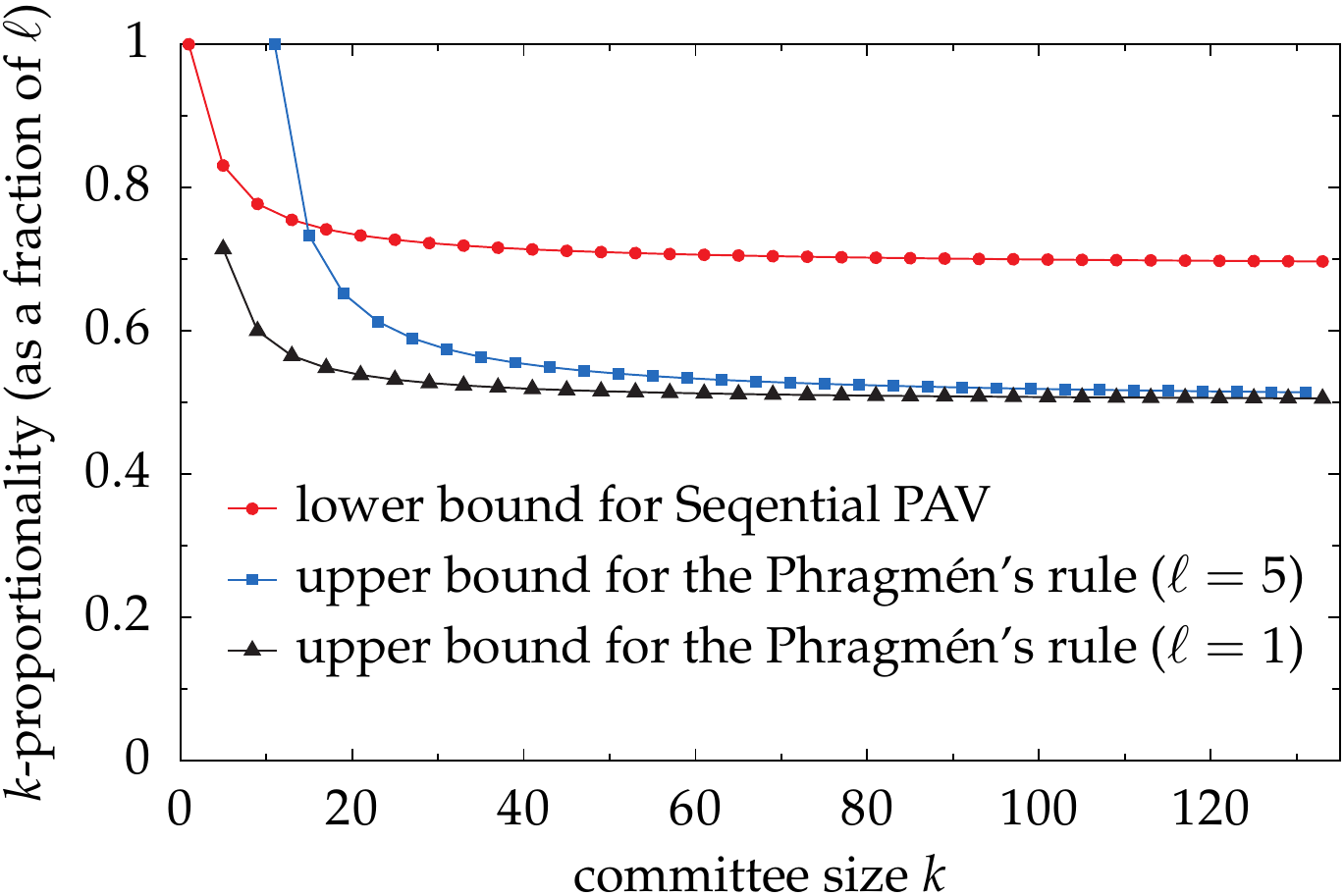}
    \end{center}
    \caption{The lower bounds on the $k$-proportionality degree of Sequential PAV as computed by the LP from \Cref{fig:ilpFormulation2}. The plot compares these lower-bounds with the bounds for the Phragm\'{e}n's Sequential Rule, as given in \Cref{prop:phrag_hard_instance}.}
    \label{fig:seq_pav_bounds_all}
\end{figure}

Summarizing, Sequential PAV should be preferred over PAV and over the Phragm\'{e}n's Sequential Rule, when proportionality and committee enlargement monotonicity are the primary requirements. On the other hand, Sequential PAV does not satisfy strong unanimity---if a decision maker considers this a serious flaw, then the Phragm\'{e}n's Sequential Rule is a good alternative.

\section{Proportionality Degree of Thiele Methods}

In this section we establish the bounds on the proportionality degrees for a large subclass of Thiele methods. We first obtain a generic result that covers the whole class of rules, and then we show how the general result can be used to assess the proportionality degrees for concrete rules in the class; for the selected concrete rules, we will show that our estimations are accurate. In the next section we will analyze the same subclass from the perspective of utilitarian efficiency. Our results will show that there is a whole spectrum of rules that implement different tradeoffs between proportionality and utilitarian efficiency and will allow us to accurately quantify these tradeoffs. 

\begin{theorem}\label{thm:general_bounds}
Let $\lambda\colon \reals \to \reals_{+}$ be a non-increasing, convex function, and let $g\colon \naturals \times \naturals \to \reals$ be a function satisfying $g(\ell, k) \leq k$, and the following equality: 
\begin{align}\label{eq:proportionality_condition}
(k - g(\ell, k))\lambda(1 + g(\ell, k)) = \frac{k - \ell}{\ell} \cdot \max_{x \in [k]} x \lambda(x) \qquad \text{for each $\ell, k \in [m], \ell \leq k$.}
\end{align}
Then, the $\lambda$-Thiele rule has the $k$-proportionality guarantee of $g$.
\end{theorem}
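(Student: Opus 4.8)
The plan is to run a swap-optimality argument in the spirit of Aziz et al.~(\Cref{thm:pav_prop_degree}), but made quantitative by exploiting convexity of $\lambda$. Fix a committee size $k$, a winning committee $W \in \calR_\lambda(A,k)$, and an $\ell$-large group $V$ with $|\bigcap_{i\in V}A(i)| \ge g(\ell,k)$. Suppose for contradiction that $\sat_V(A,W) < g(\ell,k)$. The goal is to find a candidate $c' \in \bigcap_{i\in V}A(i) \setminus W$ whose addition to $W$ (in exchange for the worst candidate already in $W$) strictly increases the total $\lambda$-score, contradicting optimality of $W$.

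First I would estimate the average \emph{gain} from bringing in a commonly-approved candidate. Since there are $g(\ell,k) \ge \sat_V(A,W)$ candidates in $\bigcap_{i\in V}A(i)\setminus W$ (using that all of them are outside $W$ except possibly few — here one must be slightly careful, but $|\bigcap A(i)| \ge g(\ell,k)$ and average satisfaction $< g(\ell,k)$ forces at least one such candidate, and averaging over all of them works), some candidate $c'$ in this set, when added to $W$, raises the $\lambda$-score by at least $\sum_{i\in V}\lambda(1+|A(i)\cap W|)$ divided by the number of such candidates; using convexity of $\lambda$ and Jensen's inequality on the values $|A(i)\cap W|$, whose average over $V$ is $\sat_V(A,W) < g(\ell,k)$, this per-voter gain is at least $\lambda(1+g(\ell,k))$, and since $|V|\ge \ell\cdot\frac nk$, and dividing by at most $k - g(\ell,k)$ available candidates, the total gain from the best such $c'$ is at least $\frac{\ell n}{k}\cdot\frac{\lambda(1+g(\ell,k))}{1}$ appropriately normalized — this is where the factor $(k-g(\ell,k))\lambda(1+g(\ell,k))$ in \eqref{eq:proportionality_condition} comes from.

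Next I would upper-bound the \emph{loss} from removing the least valuable member of $W$. The candidate $c\in W$ contributing least to the total score contributes at most $\frac1k \sc_\lambda(W) = \frac1n\sum_{i\in N}\sc_\lambda(W,i)/k \cdot n$; more precisely, averaging, there is some $c\in W$ whose removal decreases the score by at most $\frac1k\sc_\lambda(W)$. Then I would bound $\sc_\lambda(W)$ itself: writing $\sc_\lambda(W) = \sum_{i\in N}\sum_{j=1}^{|A(i)\cap W|}\lambda(j)$ and using $\lambda(j) \le \lambda(x)$ cleverly — actually the right bound is $\sum_{j=1}^{t}\lambda(j) \le \max_{x\in[k]} x\lambda(x) \cdot (\text{something})$; since $\lambda$ is non-increasing, $\sum_{j=1}^t \lambda(j)$ telescopes against $\max_x x\lambda(x)$, giving $\sc_\lambda(W) \le n\cdot\max_{x\in[k]}x\lambda(x)$ roughly (each voter's partial harmonic-type sum is at most the max of $x\lambda(x)$ times a logarithmic-free bound — here one wants $\sum_{j=1}^{t}\lambda(j)\le \max_x x\lambda(x)$ when... no: the honest bound uses that for non-increasing $\lambda$, $t\lambda(t)\le\sum_{j=1}^t\lambda(j)$ is the wrong direction, so instead bound $\sum_{i\in N}\sum_j\lambda(j)$ by noting the total number of (voter, approved-committee-member) incidences is $\sum_i |A(i)\cap W| \le$ something, and pair up). The loss is then at most $\frac{1}{k}\cdot n\cdot\max_{x\in[k]}x\lambda(x)$, and for the swap to fail we need loss $\ge$ gain, i.e.
\begin{align*}
\frac{n}{k}\cdot\max_{x\in[k]}x\lambda(x) \;\ge\; \frac{\ell n}{k}\cdot\frac{\lambda(1+g(\ell,k))\,(k-g(\ell,k))}{k},
\end{align*}
which after cancelling $\frac nk$ rearranges exactly to the negation of \eqref{eq:proportionality_condition} (the factor $\frac{k-\ell}{\ell}$ arises because one actually only needs $g(\ell,k)$ of the $k$ committee slots to be "at risk", refining the crude $\frac1k$ loss bound to $\frac{\ell}{k-\ell+\dots}$). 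I would therefore reverse-engineer the precise loss bound so that equality \eqref{eq:proportionality_condition} is exactly the break-even point, yielding the contradiction.

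The main obstacle I anticipate is getting the loss bound tight enough: the crude "least member contributes $\le \frac1k$ of total" combined with a crude bound on $\sc_\lambda(W)$ will only give a factor-$k$ type statement, not the sharp $\frac{k-\ell}{\ell}\max_x x\lambda(x)$ appearing in \eqref{eq:proportionality_condition}. The fix is to argue more carefully that the candidates in $W$ that are approved by voters in $V$ are "cheap" to remove relative to how much $V$ gains, and to choose $c$ to be removed among those that hurt $V$ least — essentially one restricts the whole accounting to the group $V$ and its $\ell n/k$ voters, and uses that $W$ uses at most $k$ slots while $V$ "deserves" a $\frac{\ell}{k}$ share, so the discrepancy is governed by $\frac{k-\ell}{\ell}$. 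Making this localization rigorous — in particular handling candidates approved by $V$-voters but also by many outsiders, and ensuring the removed candidate $c$ can be chosen outside $\bigcap_{i\in V}A(i)$ — is the delicate part; convexity of $\lambda$ enters precisely to control $\sum_{i\in V}\lambda(1+|A(i)\cap W|)$ from below via Jensen once the average $|A(i)\cap W|$ over $V$ is known to be small.
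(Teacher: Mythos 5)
Your high-level skeleton---assume $\mathrm{sat}_V(A,W) < g(\ell,k)$, extract by pigeonhole a commonly approved candidate $c \notin W$, and derive a contradiction with swap-optimality of $W$ using Jensen's inequality on the convex $\lambda$---is exactly the paper's route. But both of your quantitative bounds have genuine gaps, and you flag the second one yourself. First, the loss bound fails as stated: for PAV, $\sc_\lambda(W)$ can equal $nH_k \approx n\ln k$, so ``the least valuable member contributes at most $\frac{1}{k}\sc_\lambda(W)$'' combined with $\sc_\lambda(W)\le n\max_{x\in[k]}x\lambda(x)=n$ is simply false, and no pairing-up argument rescues a bound on the total score by $n\max_x x\lambda(x)$. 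The paper avoids bounding $\sc_\lambda(W)$ altogether: it considers \emph{all} $k$ swaps $(c,c')$ for $c'\in W$, sums the score changes $\Delta(c,c')\le 0$, and observes that (i) a voter $i\in V$ never loses from any such swap (she approves the incoming $c$) and gains $\lambda(|A(i)\cap W|+1)$ once for each of the $k-|A(i)\cap W|$ members of $W$ she does not approve, while (ii) a voter $i\notin V$ loses at most $\lambda(|A(i)\cap W|)$ once for each of the $|A(i)\cap W|$ members she approves, for a total over all $k$ swaps of $|A(i)\cap W|\,\lambda(|A(i)\cap W|)\le\max_{x\in[k]}x\lambda(x)$. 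Summing gives $(n-|V|)\max_{x\in[k]}x\lambda(x) \ge \sum_{i\in V}(k-|A(i)\cap W|)\lambda(|A(i)\cap W|+1)$, and with $|V|\ge \ell n/k$ the ratio $(n-|V|)/|V|\le(k-\ell)/\ell$ appears; this is precisely the localization to $V$ versus $N\setminus V$ that you were looking for.

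Second, your explanation of where the factor $k-g(\ell,k)$ comes from (``dividing by at most $k-g(\ell,k)$ available candidates'') is the wrong mechanism: in the correct accounting it is a \emph{multiplicative} weight $k-|A(i)\cap W|$ attached to each voter's gain, and $\sum_{i\in V}(k-|A(i)\cap W|) > |V|(k-g(\ell,k))$ holds precisely because the average satisfaction is below $g(\ell,k)$. Relatedly, plain Jensen on the unweighted average of $|A(i)\cap W|$ does not suffice once these weights are present: the weights are anti-correlated with the arguments $|A(i)\cap W|+1$, and the paper needs the weighted form of Jensen followed by Chebyshev's sum inequality to replace the weighted average of $|A(i)\cap W|$ by the unweighted one before invoking $\mathrm{sat}_V(A,W)<g(\ell,k)$. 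These are not cosmetic details---they are exactly the steps that produce the two sides of Equality~\eqref{eq:proportionality_condition}.
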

\begin{proof}
Let $W$ be a committee winning according to the $\lambda$-Thiele rule.
Consider an $\ell$-large group of voters $V$ with $|\bigcap_{i \in V}A(i)| \geq g(\ell, k)$, and for the sake of contradiction, assume that $\frac{1}{|V|} \sum_{i \in V} |W \cap A(i)| < g(\ell, k)$. From the pigeonhole principle we infer that there exists a candidate $c \notin W$ who is approved by each voter from $V$ (and, possibly, by some voters outside of $V$). Let us now estimate the change of the $\lambda$-score due to replacing a committee member $c' \in W$ with $c$:
\begin{align*}
\Delta(c, c') &\geq \sum_{i \in V\colon c' \notin A(i)} \lambda(|A(i) \cap W| + 1) - \sum_{i \notin V\colon c' \in A(i)} \lambda(|A(i) \cap W|) \text{.}
\end{align*} 
Let us now assess the following sum:
\begin{align*}
\sum_{c' \in W}\Delta(c, c') &\geq \sum_{i \in V, c' \in W \setminus A(i)} \lambda(|A(i) \cap W| + 1) - \sum_{i \notin V, c' \in W \cap A(i)} \lambda(|A(i) \cap W|) \\
                             &\geq \sum_{i \in V} (k - |A(i) \cap W|) \cdot \lambda(|A(i) \cap W| + 1) - \sum_{i \notin V} |A(i) \cap W| \cdot \lambda(|A(i) \cap W|) \\
                             &\geq \sum_{i \in V} (k - |A(i) \cap W|) \cdot \lambda(|A(i) \cap W| + 1) - (n - |V|)\cdot \max_{x \in [k]} x \lambda(x) \text{.}
\end{align*}
From the Jensen's inequality, we get that:
\begin{align*}
&\sum_{i \in V} \frac{(k - |A(i) \cap W|)}{\sum_{j \in V} (k - |A(j) \cap W|)} \cdot \lambda(|A(i) \cap W| + 1) \geq \lambda\left(\sum_{i \in V} \frac{(k - |A(i) \cap W|)\cdot (|A(i) \cap W| + 1)}{\sum_{j \in V} (k - |A(j) \cap W|)}\right) \\
&\qquad\qquad = \lambda\left(1 + \frac{\sum_{i \in V} (k - |A(i) \cap W|)\cdot |A(i) \cap W|}{\sum_{i \in V} (k - |A(i) \cap W|)}\right) \\
&\qquad\qquad \text{(from Chebyshev's sum inequality, and by monotonicity of $\lambda$)} \\
&\qquad\qquad \geq \lambda\left(1 + \frac{\sum_{i \in V} (k - |A(i) \cap W|)\cdot \sum_{i \in V}|A(i) \cap W|}{|V|\sum_{i \in V} (k - |A(i) \cap W|)}\right) \\
&\qquad\qquad = \lambda\left(1 + \frac{\sum_{i \in V}|A(i) \cap W|}{|V|}\right) \geq \lambda(1 + g(\ell, k)) \text{.}
\end{align*}
Consequently, we get that:
\begin{align*}
\sum_{c' \in W}\Delta(c, c') &\geq \sum_{i \in V} (k - |A(i) \cap W|) \lambda(1 + g(\ell, k)) - (n - |V|)\cdot \max_{x \in [k]} x \lambda(x) \\
                             &> |V|(k - g(\ell, k))\lambda(1 + g(\ell, k)) - (n - |V|)\cdot \max_{x \in [k]} x \lambda(x) \text{.}
\end{align*}
Since for each $c' \in W$ we have that $\Delta(c, c') \leq 0$, it holds that:
\begin{align*}
(n - |V|)\cdot \max_{x \in [k]} x \lambda(x) > |V|(k - g(\ell, k))\lambda(1 + g(\ell, k)) \text{,}
\end{align*}
from which it follows that:
\begin{align*}
\frac{k - \ell}{\ell} \cdot \max_{x \in [k]} x \lambda(x) > (k - g(\ell, k))\lambda(1 + g(\ell, k)) \text{,}
\end{align*}
a contradiction. This completes the proof.
\end{proof}

Let us first argue that Equality~\eqref{eq:proportionality_condition} has always a unique solution, i.e., that for each positive, non-increasing, convex function $\lambda$ \Cref{thm:general_bounds} provides a single $k$-proportionality guarantee $g$. Observe that $(k - g(\ell, k))$ is a decreasing function of $g(\ell, k)$, and that $\lambda(1 + g(\ell, k))$ is a non-increasing function of $g(\ell, k)$. As a result, for $g(\ell, k) \leq k$ the left-hand side of~\eqref{eq:proportionality_condition} is a decreasing function of $g(\ell, k)$. For $g(\ell, k) = k$ the left hand side of~\eqref{eq:proportionality_condition} is lower or equal to the right-hand side; for $g(\ell, k) = -k^2$ we have the opposite, thus for each $\ell$ and $k$ there exists a unique value $g(\ell, k)$ that satisfies Equality~\eqref{eq:proportionality_condition}.

\Cref{prop:general_bound_tight}, below provides a condition that is necessary for $g$ to be a $k$-proportionality degree for $\lambda$. Thus, by finding a function $g(\ell, k)$ that violates this condition we obtain an upper-bound on the proportionality degree of the rule. We will get the most accurate upper-bound if we treat the condition there as an equality and solve it for $g$.
The equations in \Cref{thm:general_bounds}~and~\Cref{prop:general_bound_tight} differ slightly, so we cannot formally prove that our estimation given in \Cref{thm:general_bounds} is always tight. Yet, we solved the two equations for a number of representative Thiele rules (in particular, for $\lambda(i) = \nicefrac{1}{i}$, $\lambda(i) = \nicefrac{1}{\sqrt{i}}$, $\lambda(i) = \left(\nicefrac{1}{i}\right)^{\nicefrac{2}{3}}$, and $\lambda(i) = \nicefrac{1}{i^2}$) and verified that the obtained lower and upper bounds are very close. 

\begin{proposition}\label{prop:general_bound_tight}
Let $\lambda\colon \reals \to \reals$ be a non-increasing, convex function. The $k$-proportionality guarantee $g$ of the $\lambda$-Thiele rule must satisfy the following inequality: 
\begin{align*}
(k - g(\ell, k))\lambda(g(\ell, k)) \geq \frac{k - \ell}{\ell} \cdot \max_{x \in [k]} x \lambda(x+1) \qquad \text{for each $\ell, k \in [m], \ell \leq k$.}
\end{align*}
\end{proposition}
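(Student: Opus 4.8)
Proof proposal. The plan is to establish the equivalent statement ``$g(\ell,k)\le s_0$'', where $s_0$ is the (unique) root of $(k-s)\lambda(s)=\frac{k-\ell}{\ell}\max_{x\in[k]}x\lambda(x+1)$; uniqueness holds because $s\mapsto(k-s)\lambda(s)$ is strictly decreasing and positive on $[0,k)$, and this monotonicity is exactly what lets ``$g(\ell,k)\le s_0$'' yield the claimed inequality $(k-g(\ell,k))\lambda(g(\ell,k))\ge(k-s_0)\lambda(s_0)=\frac{k-\ell}{\ell}\max_x x\lambda(x+1)$. To prove $g(\ell,k)\le s_0$, I would exhibit, for every admissible pair $\ell\le k$, a single profile $A$ together with an $\ell$-large group $V$ that jointly approves $k$ candidates (so $|\bigcap_{i\in V}A(i)|=k\ge g(\ell,k)$, using the easy fact that any guarantee satisfies $g(\ell,k)\le k$), for which the $\lambda$-Thiele committee $W$ has $\sat_V(A,W)=\sigma$ for a prescribed integer $\sigma\le s_0$; \Cref{def:proportionality_guarantee} then forces $g(\ell,k)\le\sigma\le s_0$.

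The instance. Fix $x^{*}\in\argmax_{x\in[k]}x\lambda(x+1)$ and set $\sigma=\lfloor s_0\rfloor$ (in the end $\sigma$ will be the largest integer $\le s_0$ for which the construction below goes through). Take $n=kL$ voters for a large $L$, let $V$ be the first $\ell L$ of them (so $V$ is $\ell$-large), and let $V$ approve $k$ common candidates $d_1,\dots,d_k$ that nobody outside $V$ approves. Partition the remaining $(k-\ell)L$ voters into $Q=\frac{k-\sigma}{x^{*}}$ blocks of size $\beta=\frac{x^{*}(k-\ell)L}{k-\sigma}$ (choosing $L$ divisible by everything needed; the non-divisible case only requires letting block sizes differ by one), where the voters of block $j$ approve $x^{*}+1$ private candidates $e^{(j)}_{1},\dots,e^{(j)}_{x^{*}+1}$. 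The claim is that the unique $\lambda$-Thiele winner is $W=\{d_1,\dots,d_\sigma\}\cup\bigcup_{j}\{e^{(j)}_{1},\dots,e^{(j)}_{x^{*}}\}$, which has size $\sigma+Qx^{*}=k$ and gives each voter of $V$ exactly $\sigma$ representatives, so $\sat_V(A,W)=\sigma\le s_0$, which is all we need.

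Optimality of $W$, and the role of $s_0$. The step I expect to be the main obstacle is certifying that $W$ is a genuine global $\lambda$-Thiele optimum (not merely swap-stable). The way around it is the observation that on this profile $\sc_\lambda$ is a \emph{separable concave} function: since voters inside a block (resp.\ inside $V$) are interchangeable and the private candidates of a block are interchangeable, $\sc_\lambda(W')$ depends only on the number $\tau$ of $d$'s and the numbers $p_j$ of $e^{(j)}$'s that $W'$ contains, and equals $\ell L\,F(\tau)+\beta\sum_j F(p_j)$ with $F(t)=\sum_{u\le t}\lambda(u)$ concave and $\tau+\sum_j p_j=k$. Maximizing such a function subject to a cardinality budget and the box constraints $p_j\le x^{*}+1$ is solved exactly by greedy ``water-filling'', and the greedy output is precisely $W$ as soon as the two boundary marginals are ordered correctly: (i) the $\sigma$-th $d$ beats a hypothetical $(x^{*}+1)$-st $e$ of a block, i.e.\ $\ell L\,\lambda(\sigma)\ge\beta\,\lambda(x^{*}+1)$, which after substituting $\beta$ is exactly $(k-\sigma)\lambda(\sigma)\ge\frac{k-\ell}{\ell}x^{*}\lambda(x^{*}+1)$ — this holds because $\sigma=\lfloor s_0\rfloor\le s_0$ and $(k-s)\lambda(s)$ is decreasing — and (ii) the $x^{*}$-th $e$ of a block beats a hypothetical $(\sigma+1)$-st $d$, i.e.\ $(k-\sigma)\lambda(\sigma+1)\le\frac{k-\ell}{\ell}x^{*}\lambda(x^{*})$. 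Condition (i) is literally the Proposition's inequality at $g=\sigma$, so it carries the content of the bound; condition (ii) is a compatibility requirement that prevents $V$ from being made \emph{too} under-represented, and it pins down a lower threshold $s_2\le s_0$ on $\sigma$ (one checks $s_2\le s_0$ from $\lambda(s_0+1)\le\lambda(s_0)$ and the definition of $s_0$). One then takes $\sigma$ to be the largest integer in $[s_2,s_0]$ realizable by a (possibly non-uniform) choice of block sizes; this is where the ``$+1$'' mismatch between (i) and (ii) forces a small loss relative to \Cref{thm:general_bounds} (as noted in the remark preceding the Proposition), but it never costs the inequality: for any such $\sigma$ we have $g(\ell,k)\le\sigma\le s_0$, and monotonicity of $s\mapsto(k-s)\lambda(s)$ concludes the proof.
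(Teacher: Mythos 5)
Your opening reduction---restating the claim as $g(\ell,k)\le s_0$ for the root $s_0$ of $(k-s)\lambda(s)=\frac{k-\ell}{\ell}\max_{x}x\lambda(x+1)$ and then invoking monotonicity of $s\mapsto(k-s)\lambda(s)$---is sound and is equivalent to what the paper does. The gap is in the witness profile. Because your group $V$ consists of identical voters who approve exactly the $k$ candidates $d_1,\dots,d_k$, which nobody else approves, while the competing candidates sit in blocks disjoint from $V$, your construction can only ever certify \emph{integer} values of $\mathrm{sat}_V$, and only when (a) some integer $\sigma$ satisfies both water-filling conditions, i.e.\ lies in $[s_2,s_0]$, and (b) $Q=(k-\sigma)/x^{*}\ge 1$. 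Both fail on cases squarely inside the proposition's scope. For $\lambda\equiv 1$ with $k=3$, $\ell=2$ one computes $s_2=s_0=\nicefrac{3}{2}$, so no admissible integer exists; and indeed no disjoint-block profile can push $\mathrm{sat}_V$ below $3$ there (each $d_j$ is approved by $\nicefrac{2n}{3}$ voters, each block candidate by at most $\nicefrac{n}{3}$), even though the claimed bound $g\le\nicefrac{3}{2}$ is correct and can only be witnessed by profiles whose distracting candidates are approved by a \emph{mixture} of voters inside and outside $V$. Worse, for PAV one has $x^{*}=\argmax_{x}\frac{x}{x+1}=k$, so $Q=(k-\sigma)/k<1$ for every $\sigma\ge 1$ and the construction does not parse at all; the ``non-uniform block sizes'' escape hatch replaces conditions (i)--(ii) by different inequalities that no longer produce the term $\max_{x}x\lambda(x+1)$.

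The paper's construction repairs exactly these two defects. It keeps the $k$ commonly approved candidates $B$, but each distracting candidate $d_i$ is approved by an $\nicefrac{x}{k}$-fraction of $V$ \emph{and} a $\nicefrac{y}{k}$-fraction of the outside voters (cyclically shifted), with $y$ the maximizer of $x\lambda(x+1)$. This buys two things a disjoint-block profile cannot: the satisfaction of $V$ under the committee $D$ is the tunable, generally non-integer value $x=g(\ell,k)$ (minus $\nicefrac{1}{|V|}$ after deleting one approval); and since $D$ already has exactly $k$ elements, optimality of $D$ requires only the one-sided swap condition---your condition (i)---so the lower threshold $s_2$ and the ``does $[s_2,s_0]$ contain an integer'' question never arise. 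If you want to keep your separable-concave/water-filling viewpoint, you must at least let the block candidates overlap $V$'s approval sets; doing so essentially recovers the paper's profile.
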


\begin{proof}
Let $y = \argmax_{i \in [k]} i \lambda(i)$, and for the sake of contradiction, let us assume that for some $\ell, k \in [m]$, and for some $\epsilon > 0$ it holds that:
\begin{align*}
(k - g(\ell, k))\lambda(g(\ell, k)) + \epsilon < \frac{k - \ell}{\ell} \cdot y \lambda(y+1) \text{.}
\end{align*}
To simplify the notation we set $x = g(\ell, k)$. Clearly, $x \leq k$. We rewrite the above inequality:
\begin{align*}
(k - x)\lambda(x) + \epsilon < \frac{k - \ell}{\ell} \cdot y \lambda(y+1) \text{.}
\end{align*}

We will construct an instance of an election witnessing that $g$ cannot be a proportionality guarantee of the $\lambda$-Thiele rule. 
Let $C = B \cup D$ be the set of candidates, where $B=\{b_1, b_2, \ldots, b_k\}$ and $D = \{d_1, d_2, \ldots, d_k\}$.
We divide the voters into two disjoint groups, $V$ and $V'$, such that $|V| = n\cdot\frac{\ell}{k}$ and $|V'| = n - |V|$. Without loss of generality, let us assume that $n$ is divisible by $k^2$. Each candidate $b \in B$ is approved by all the voters from $V$. Candidate $d_1$ is approved by the first $|V|\frac{x}{k}$ voters from $V$ and by the first $|V'|\frac{y}{k}$ voters from $V'$. 
The approval set of candidate $d_i$ is constructed by taking the $|V|\frac{x}{k}$ voters from $V$ that appear right after the voters from $V \cap N(d_{i-1})$ and by taking $|V'|\frac{y}{k}$ voters from $V'$ that appear right after the voters from $V' \cap N(d_{i-1})$, taking the cyclic shift if necessary. Finally, from the approval set of the last candidate $d_k$ we remove one arbitrary voter.

We will show that $D$ is an optimal committee for this instance. For the sake of contradiction assume that an optimal committee $W$ contains $z \geq 1$ candidates from $B$. Then, the voters from $V$ and $V'$ approve on average $\frac{x}{k}(k - z) + z$ and $\frac{y}{k}(k - z)$ committee members, respectively. Since $\lambda$ is convex, given a constraint on the total number of committee members approved, the score of a committee is maximized when the voters approve roughly the same number of candidates in $W$. Thus, we can assume that each voter from $V$ approves $\left\lfloor \frac{x}{k}(k - z)\right\rfloor + z$ or $\left\lceil \frac{x}{k}(k - z)\right\rceil + z$ members of $W$ and, analogously, that each voter from $V'$ approves $\left\lfloor \frac{y}{k}(k - z)\right\rfloor$ or $\left\lceil \frac{y}{k}(k - z)\right\rceil$ members of $W$. Consequently, if we replace one candidate from $B$ with a candidate from $D$ in $W$, then the score of the committee will change by:
\begin{align*}   
\Delta &\geq |V'|\frac{y}{k}\lambda\left(\left\lceil \frac{y}{k}(k - z)\right\rceil + 1\right) - \left(|V| - |V|\frac{x}{k} + 1 \right)\lambda\left(\left\lfloor \frac{x}{k}(k - z)\right\rfloor + z\right) \\
       &\geq |V'|\frac{y}{k}\lambda\left(\left\lceil \frac{y}{k}\cdot k\right\rceil + 1\right) - \left(|V| - |V|\frac{x}{k} + 1 \right)\lambda\left(\left\lfloor \frac{x}{k}\cdot k\right\rfloor\right) \\
       &= |V'|\frac{y}{k}\lambda(y + 1) - \left(|V| - |V|\frac{x}{k} + 1\right)\lambda(x) \\
       &= \left(n - n\frac{\ell}{k}\right)\frac{y}{k}\lambda(y + 1) - n\frac{\ell}{k}\left(1 - \frac{x}{k} + \frac{k}{n\ell}\right)\lambda(x) \text{.}
\end{align*}
Since $\Delta \leq 0$, we get that:
\begin{align*}   
0 &\geq (k - \ell)y\lambda(y + 1) - \ell(k - x)\cdot\lambda(x) - \frac{k^2}{n}\lambda(x) \text{,}
\end{align*}
which is equivalent to:
\begin{align*}   
 (k - x)\cdot\lambda(x) + \frac{k^2}{n\ell}\lambda(x) \geq \frac{k - \ell}{\ell}y\lambda(y + 1) \text{.}
\end{align*}
By taking $n$ large enough, so that $\frac{k^2}{n\ell}\lambda(x) \leq \epsilon$, we get a contradiction. Consequently, we get that $D$ forms a winning committee. Now, observe that $V$ is $\ell$-large, and that $|\bigcap_{i \in V}A(i)| \geq x$. The average number of representatives of $V$ is however, lower than $x$. Thus, $g$ is not a proportionality guarantee for the $\lambda$-Thiele method. This gives a contradiction, and completes the proof. 
\end{proof}

Let us give an example application of \Cref{thm:general_bounds}, by considering $\lambda_{\sqrt{\pav}}(i) = \nicefrac{1}{\sqrt{i}}$. In this case, the equality from \Cref{thm:general_bounds} becomes:
\begin{align}
\frac{k - \ell}{\ell} \cdot \sqrt{k} = \frac{k - g(\ell, k)}{\sqrt{1 + g(\ell, k)}} \text{.}
\label{eq:relation_for_sqrt_pav}
\end{align}
Let us calculate a proportionality guarantee $g_{\sqrt{\pav}}$ for $\lambda_{\sqrt{\pav}}$-Thiele method by solving~\eqref{eq:relation_for_sqrt_pav}, and picking the solution for $g(\ell, k)$ that is no greater than $k$. We solved the equality analytically, yet the formulas for the bounds have two lines in displayed equations. Thus, we decided that it is more informative if we present these guarantees in a form of plots rather than closed formulas. We plot $g_{\sqrt{\pav}}$ and the guarantees for three other rules in~\Cref{fig:sqrt_pav_bounds_all} (the plots for upper and lower bounds are in fact indistinguishable for these rules, thus our estimation for them are really tight). Interestingly, we can see that the loss of proportionality in comparison to PAV---especially for reasonably large cohesive groups of voters---is moderate. This can justify using rules such as the $\lambda_{\sqrt{\pav}}$-Thiele method in some contexts. For instance, as we will see in the next section, such rules guarantee much better utilitarian efficiency.

\begin{figure}[tb]
    \begin{center}
      \includegraphics[scale=0.75]{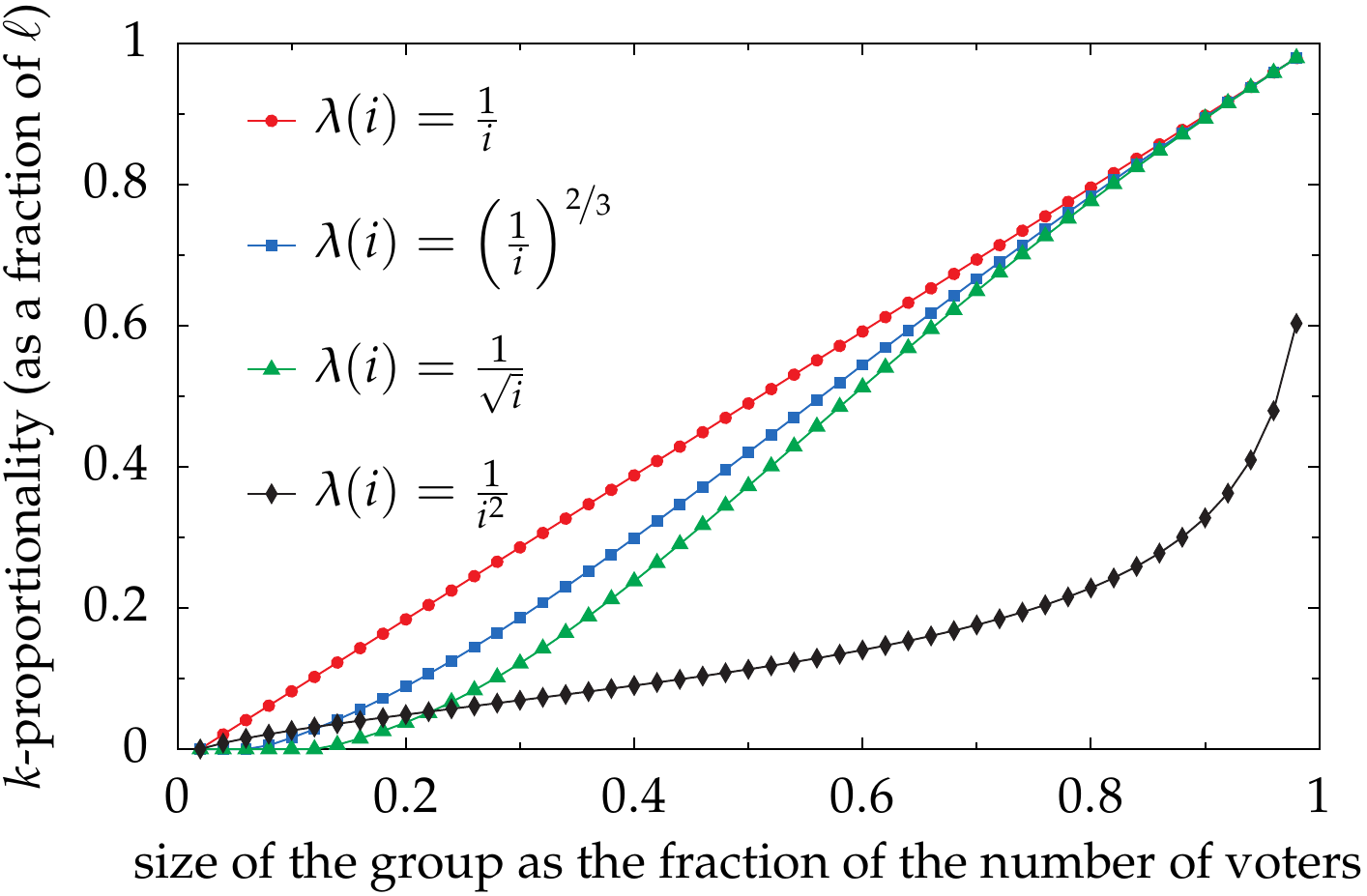}
    \end{center}
    \caption{The lower bounds on the $k$-proportionality degree of selected $\lambda$-Thiele rules.}
    \label{fig:sqrt_pav_bounds_all}
\end{figure}

\section{The Tradeoff Between Proportionality and Utilitarian Efficiency}\label{sec:utilitarian_efficiency}

In the previous sections we assessed the proportionality degree for a number of committee election rules. Here, our goal is estimate the utilitarian efficiency of the considered rules. Together with our previous results, this will allow us to express the studied rules as implementing certain tradeoffs between proportionality and utilitarian efficiency. 

We define the utilitarian efficiency of a rule as a lower bound on the ratio between the total number of approvals that the candidates from the winning committee get from all the voters, to the total number of approvals that any committee can possibly get. Formally:

\begin{definition}[Utilitarian Efficiency]
For a given committee size $k$ an ABC rule $\calR$ has a utilitarian efficiency guarantee of $\alpha$ if for each winning committee $W \in \calR(A, k)$ we have:
\begin{align*}
\sum_{i \in N}|A(i) \cap W| \geq \alpha \cdot \max_{W_{\opt} \in S_k(C)} \sum_{i \in N}|A(i) \cap W_{\opt}| \text{.}
\end{align*}
\end{definition}

The next theorem gives a generic tool for calculating the utilitarian efficiency for a large subclass of Thiele methods. In particular, \Cref{thm:utilitarian_efficiency} generalizes some known results from the literature---an asymptotically tight utilitarian efficiency guarantee is already known for PAV and for the $p$-geometric rule~\cite{lac-sko:quantitative}.  

\begin{theorem}\label{thm:utilitarian_efficiency}
Let $\lambda\colon \naturals \to \reals$ be a non-increasing, convex function. For a committee size $k$, the $\lambda$-Thiele rule has the utilitarian efficiency guarantee of $\frac{\alpha}{1 + \alpha} > \alpha - \alpha^2$, for $\alpha$ defined as:
\begin{align*}
  \alpha\lambda(1) = \lambda\left(1 + k\alpha \right) \text{.}
\end{align*}
\end{theorem}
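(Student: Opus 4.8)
The plan is to use a swap argument analogous to the one in the proof of Theorem~\ref{thm:general_bounds}: if the winning committee $W$ of the $\lambda$-Thiele rule is far from the optimal utilitarian committee $W_{\opt}$, we find a profitable single-candidate swap, contradicting optimality. Write $\sat(W) = \sum_{i\in N}|A(i)\cap W|$ for the utilitarian value of a committee, and let $W_{\opt}$ be a committee maximizing $\sat$. For any candidate $c \in W_{\opt}\setminus W$ and $c' \in W$, replacing $c'$ by $c$ changes the $\lambda$-score by
\begin{align*}
\Delta(c,c') \geq \sum_{i\in N(c)\colon c'\notin A(i)} \lambda(|A(i)\cap W|+1) - \sum_{i\in N(c')} \lambda(|A(i)\cap W|).
\end{align*}
Summing over $c' \in W$, the first term contributes at least $\lambda(k)$ (or, more carefully, $\lambda(1+k\alpha)$-type bounds via the continuous relaxation) for each approval that $c$ receives, while the second term is bounded using $\lambda(x)\le\lambda(1)$ and the fact that $\sum_{c'\in W}|N(c')| = \sat(W)$. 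First I would therefore estimate $\sum_{c'\in W}\Delta(c,c') \geq |N(c)|\cdot\lambda(1+|\text{typical overlap}|) - \sat(W)\cdot\lambda(1)$, and then sum over $c\in W_{\opt}\setminus W$ to bring in $\sat(W_{\opt})$.

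The key calculation is to choose the threshold correctly. Since each $\Delta(c,c')\le 0$, we obtain an inequality relating $\sat(W_{\opt})$, $\sat(W)$, and the value $\lambda$ takes at the ``average'' saturation level of voters under $W$. To get the cleanest bound I would set $\alpha = \sat(W)/\sat(W_{\opt})$ and argue that the average number of representatives a voter has in $W$ is at most $k\alpha$ in the relevant sense, so that convexity (Jensen) pushes $\lambda(1+\text{avg})$ down to at least $\lambda(1+k\alpha)$; meanwhile the loss term is controlled by $\alpha\lambda(1)$. This yields an inequality of the form $\lambda(1+k\alpha) \le \alpha\lambda(1)$, forcing $\alpha$ to be at least the value defined by the equation $\alpha\lambda(1) = \lambda(1+k\alpha)$ — here one uses that $\lambda(1+k\alpha)/\alpha$ is decreasing in $\alpha$ (numerator non-increasing, denominator increasing), so the equation has a unique root and the inequality pins $\alpha$ from below. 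Finally the elementary estimate $\frac{\alpha}{1+\alpha} > \alpha - \alpha^2$ (equivalently $\alpha>0$) converts the guarantee into the stated closed form; I would also double-check the boundary scaling $n$ divisible by $k^2$ etc.\ is not needed here since this direction is a pure lower-bound argument on the rule's output.

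The main obstacle, as in Theorem~\ref{thm:general_bounds}, is handling the ``average overlap'' term rigorously: a priori the voters' loads $|A(i)\cap W|$ are spread unevenly, and one must show that the worst case for the swap argument is the uniform one. This is exactly where convexity of $\lambda$ enters — via Jensen's inequality applied with weights proportional to $(k-|A(i)\cap W|)$, together with a Chebyshev sum inequality step to decouple the two sums $\sum_i(k-|A(i)\cap W|)$ and $\sum_i|A(i)\cap W|$, mirroring the manipulation already carried out in the proof of Theorem~\ref{thm:general_bounds}. A secondary subtlety is that $W_{\opt}$ and $W$ may share candidates, so the swaps should only range over $c\in W_{\opt}\setminus W$ and $c'\in W\setminus W_{\opt}$; bookkeeping the shared candidates correctly (they contribute equally to both sides) is routine but must be done carefully to land exactly on the constant in $\alpha\lambda(1)=\lambda(1+k\alpha)$ rather than a slightly weaker one.
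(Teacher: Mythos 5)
Your overall architecture (swap argument, convexity via Jensen, and the defining equation for $\alpha$) matches the paper's, but the central quantitative step has a genuine gap. The paper does \emph{not} sum over all $c \in W_{\opt}\setminus W$, nor does it set $\alpha = \mathrm{sat}(W)/\mathrm{sat}(W_{\opt})$ and try to verify the defining equation for that ratio. Instead it isolates the \emph{single} most-approved candidate $c^{*} \in W_{\opt}\setminus W$ with $n_{c^*}=|N(c^*)|$, and the swap inequality summed over $c'\in W$ (after the elementary bound $(k-w_i)\lambda(w_i+1)\ge k\lambda(w_i+1)-w_i\lambda(\max(w_i,1))$ and uniform-weight Jensen over $N(c^*)$ --- no Chebyshev step is needed here, unlike in Theorem~\ref{thm:general_bounds}) yields $\lambda(1)\sum_i w_i \ge k\,n_{c^*}\,\lambda\bigl(1+\tfrac{1}{n_{c^*}}\sum_{i\in N(c^*)}w_i\bigr)$. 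The theorem then follows from a two-case analysis: if $\sum_i w_i \ge \alpha k n_{c^*}$, the bound $\mathrm{sat}(W_{\opt})\le \mathrm{sat}(W)+k n_{c^*}$ (which also disposes of your worry about shared candidates) immediately gives a ratio of at least $\tfrac{\alpha}{1+\alpha}$; if $\sum_i w_i < \alpha k n_{c^*}$, the displayed inequality forces $\alpha\lambda(1) > \lambda(1+k\alpha)$, contradicting the definition of $\alpha$.

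Your plan to derive $\lambda(1+k\alpha)\le\alpha\lambda(1)$ directly for $\alpha=\mathrm{sat}(W)/\mathrm{sat}(W_{\opt})$ cannot work as stated: it would establish a utilitarian guarantee of $\alpha$ itself, which is strictly stronger than the claimed $\tfrac{\alpha}{1+\alpha}$, and the step ``the average number of representatives a voter has in $W$ is at most $k\alpha$ in the relevant sense'' is exactly where this fails --- the swap argument controls $\sum_i w_i$ relative to $k n_{c^*}$, not relative to $\mathrm{sat}(W_{\opt})$, and converting $n_{c^*}$ into $\mathrm{sat}(W_{\opt})$ via $\mathrm{sat}(W_{\opt})\le\mathrm{sat}(W)+kn_{c^*}$ is precisely what costs the factor $\tfrac{1}{1+\alpha}$. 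Relatedly, summing over all $c\in W_{\opt}\setminus W$ introduces a factor $|W_{\opt}\setminus W|$ (up to $k$) on the negative term and weakens the bound; the single-candidate choice is the device that avoids this. Your uniqueness/monotonicity remark about the root of $\alpha\lambda(1)=\lambda(1+k\alpha)$ is fine and mirrors the paper's discussion.
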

\begin{proof}
Let $W_{\calR}$ and $W_{\opt}$ denote, respectively, the committee selected by rule $\calR$, and the committee maximizing the total utility of the voters. For each voter $i \in N$, let $w_i = |A(i) \cap W_{\calR}|$. Without loss of generality, we can assume that $W_{\calR} \neq W_{\opt}$. Let $c$ be the candidate that is approved by most voters among the candidates from $W_{\opt} \setminus W_{\calR}$, and let $n_c = |N(c)|$. For each candidate $c' \in W_{\calR}$ let $\Delta(c, c')$ denote the change of the total $\lambda$-score due to replacing $c'$ with $c$ in committee $W_{\calR}$. Since $W_{\calR}$ is optimal according to $\calR$, we have that $\Delta(c, c') \leq 0$ for each $c' \in W_{\calR}$. Thus:
\begin{align*}
0 &\geq \sum_{c' \in W_{\calR}}\Delta(c, c') = \sum_{c' \in W_{\calR}} \left(\sum_{i \in N(c)\setminus N(c')} \lambda(w_i + 1) - \sum_{i \in N(c')\setminus N(c)} \lambda(w_i) \right) \\
  &= \sum_{i \in N(c)} (k - w_i) \lambda(w_i + 1) - \sum_{i \notin N(c)} w_i \lambda(w_i) \\
  &\geq \sum_{i \in N(c)} k \lambda(w_i + 1) - \sum_{i \in N} w_i \lambda(\max(w_i, 1)) \\
  &\geq \sum_{i \in N(c)} k \lambda(w_i + 1) - \lambda(1) \sum_{i \in N} w_i \text{.}
\end{align*}
Thus:
\begin{align*}
  \lambda(1) \sum_{i \in N} w_i  \geq \sum_{i \in N(c)} k \lambda(w_i + 1) \text{.}
\end{align*}
Since $\lambda$ is convex, by the Jensen's inequality we get that:
\begin{align*}
  \lambda(1) \sum_{i \in N} w_i \geq k n_c \lambda\left(1 + \frac{1}{n_c}\sum_{i \in N(c)}w_i \right) \text{.}
\end{align*}

Now, let us consider two cases. If $\left(\sum_{i \in N} w_i\right) \geq \alpha k n_c$, then the ratio of total utilities of $W_{\opt}$ and $W_{\calR}$ is at most equal to:
\begin{align*}
\tau = \frac{\sum_{i \in N} w_i + kn_c}{\sum_{i \in N} w_i} \leq 1 + \frac{1}{\alpha} = \frac{1 + \alpha}{\alpha} \text{.}
\end{align*}
Thus, $\nicefrac{1}{\tau}$ equals at least $\nicefrac{\alpha}{1 + \alpha}$. Otherwise, i.e., when $\left(\sum_{i \in N} w_i\right) < \alpha k n_c$, we have that:
\begin{align*}
  \alpha\lambda(1) > \lambda\left(1 + \frac{1}{n_c}\sum_{i \in N(c)}w_i \right) \geq \lambda\left(1 + \frac{1}{n_c}\sum_{i \in N}w_i \right) \geq \lambda\left(1 + k\alpha \right) \text{,}
\end{align*}
a contradiction. Thus, we get that only the first case is possible, and so the ratio of total utilities of $W_{\calR}$ and $W_{\opt}$ is at least $\nicefrac{\alpha}{1 + \alpha}$.
\end{proof}

Finally, the subsequent proposition provides an upper bound on utilitarian efficiency guarantees of Thiele rules. In particular, these upper bounds confirm that for many rules the guarantees given in \Cref{thm:utilitarian_efficiency} are asymptotically tight up to a multiplicative factor of 2 (as we will show later on, usually we have $\alpha = o(1)$, which gives the asymptotic tightness).\footnote{Similarly as for the proportionality---since the equations in \Cref{thm:utilitarian_efficiency}~and~\Cref{prop:utilitarian_efficiency_tight} differ slightly, we cannot claim that our estimation is asymptotically tight for \emph{every} Thiele method. Yet, solving the two equations for a number of representative rules shows the lower and the upper bounds are asymptotically the same.}

\begin{proposition}\label{prop:utilitarian_efficiency_tight}
Let $\lambda\colon \naturals \to \reals$ be a decreasing, convex function. Let $\alpha$ satisfy $\alpha\lambda(1) = \lambda\left( k\alpha \right)$.
For a committee size $k$, the utilitarian efficiency guarantee of the $\lambda$-Thiele rule is below $2\alpha - \alpha^2$.
\end{proposition}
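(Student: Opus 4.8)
The plan is to mirror the construction used in the proof of \Cref{prop:general_bound_tight}, tailored now to the utilitarian objective: I will build an explicit profile on which the $\lambda$-Thiele rule is \emph{forced} to pick a committee whose total utility is only about $(2\alpha - \alpha^2)$ times the optimum. As before, I would take two disjoint candidate blocks $B = \{b_1,\dots,b_k\}$ and $D = \{d_1,\dots,d_k\}$ and two voter groups, a small group $V$ of a $1/\alpha$-ish fraction of the electorate that approves \emph{all} of $B$, and a large complementary group $V'$ whose voters each approve a carefully chosen number of the $d_i$'s, arranged cyclically so that within $V'$ every voter approves roughly the same number of candidates from any size-$k$ subcommittee of $D$. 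The block $D$ will be designed so that each $d_i$ is approved by exactly $k\alpha$ voters of $V'$ (up to rounding), so that $N(d_i)$ is the ``heavy'' set whose size drives the $\max_x x\lambda(x)$-type quantity — here it will be the threshold defined by $\alpha\lambda(1) = \lambda(k\alpha)$.

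Next I would verify the key optimality claim: that $D$ is (or is among) the winning committee(s) of the $\lambda$-Thiele rule on this profile, even though $B$ gives much higher total utility (every voter in $V$ gets $k$ representatives from $B$). The argument is the familiar swap estimate: replacing one $b \in W$ by some $d \in D \setminus W$ changes the $\lambda$-score by at least $|V'|\tfrac{k\alpha}{?}\,\lambda(\text{something} + 1) - (\text{small loss on }V)\,\lambda(1)$; convexity of $\lambda$ plus the cyclic balancing lets me push all $V'$-voters to nearly-equal approval counts, and the defining equation $\alpha\lambda(1) = \lambda(k\alpha)$ is exactly what makes this swap (weakly) score-improving, so no committee containing a $b_i$ can be optimal. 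A rounding/perturbation term of order $k^2/n$ appears exactly as in \Cref{prop:general_bound_tight}, and is killed by taking $n$ large.

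Finally I would compute the utility ratio on the forced committee $W = D$. The total utility of $D$ is about $|V'| \cdot k \cdot (k\alpha / n)\cdot(\text{avg load per }V'\text{-voter})$ — concretely each $V'$-voter approves about $k\alpha$ members of $D$ while no $V$-voter approves any — whereas the optimal committee (say $B$, or a mix) collects utility from all of $V$ at rate $k$ per voter plus whatever $V'$ contributes. Setting $|V|$ proportional to $\alpha n$ and comparing, the ratio $\sum_i |A(i)\cap D| \big/ \max_{W_{\opt}} \sum_i |A(i)\cap W_{\opt}|$ works out to $2\alpha - \alpha^2$ after the arithmetic (the $2\alpha$ coming from counting $V$ once and $V'$'s $D$-approvals once, the $-\alpha^2$ being the overlap correction). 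The main obstacle I anticipate is the bookkeeping around the cyclic-shift construction and the floor/ceiling terms: getting the ``every $V'$-voter sees an (almost) balanced load from any size-$k$ $D$-subcommittee'' property to hold simultaneously with $|N(d_i)| = k\alpha$ requires choosing the number of $d_i$'s each $V'$-voter approves consistently, and I would need to be careful (as the authors are in \Cref{prop:phrag_hard_instance}) that adversarial tie-breaking genuinely selects $D$; this is where a divisibility assumption on $n$, $k$, and a helper integer $L$ does the work, exactly paralleling the earlier propositions.
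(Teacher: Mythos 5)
There is a genuine gap, and it sits at the heart of your plan: the claim that a swap argument forces the winning committee to be $W=D$. In the correct hard instance (blocks $B$ and $D$; a group $V$ approving all of $B$; the voters outside $V$ split into $k$ disjoint groups, each approving a single $d_i$; sizes calibrated by $|V|k\alpha = n-|V|$), the $\lambda$-Thiele rule is \emph{not} forced to discard $B$. The marginal score of the $j$-th candidate of $B$ is $|V|\lambda(j)$, which dominates the marginal score $\frac{n-|V|}{k}\lambda(1)=|V|\alpha\lambda(1)$ of a fresh $d_i$ until roughly $k\alpha$ candidates of $B$ have been taken; the defining equation $\alpha\lambda(1)=\lambda(k\alpha)$ is exactly this break-even point. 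So the swap estimate can only show that $W$ contains \emph{at most} $\alpha k$ candidates from $B$ --- not zero --- and the bound $2\alpha-\alpha^2=\alpha+\alpha(1-\alpha)$ comes precisely from such a \emph{mixed} committee: the $\alpha k$ members from $B$ contribute a fraction $\alpha$ of the optimum $k|V|$, and the remaining $(1-\alpha)k$ members from $D$, each approved by $|V|\alpha$ voters, contribute another $\alpha(1-\alpha)$. Relatedly, you misread the role of $k\alpha$ in the equation: it is the number of $B$-members of the winning committee (the representative count of a $V$-voter appearing inside $\lambda$ on the right-hand side), not the size of $N(d_i)$.

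Concretely, with your parameters ($|V|\propto\alpha n$, each voter outside $V$ approving about $k\alpha$ candidates of $D$ cyclically): (i) swapping the last $B$-candidate out of a committee loses $|V|\lambda(1)=\alpha n\lambda(1)$ and gains only about $(1-\alpha)n\,\alpha\,\lambda\bigl((k-1)\alpha+1\bigr)\le(1-\alpha)\alpha n\lambda(1)$, which is strictly smaller, so $D$ is \emph{not} a winning committee and the ``forced committee'' step fails; and (ii) even granting $W=D$, the utility ratio of $D$ against the optimum $B$ works out to $1-\alpha$, not $2\alpha-\alpha^2$, so the final arithmetic does not close either. The cyclic-shift machinery imported from \Cref{prop:general_bound_tight} is also unnecessary here --- disjoint singleton approval sets for the $D$-groups suffice --- but that is cosmetic. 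The essential missing ideas are the correct calibration $|V|k\alpha=n-|V|$ and an upper bound on the utility of an arbitrary winning committee that accounts for \emph{both} its $B$-part and its $D$-part.
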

\begin{proof}
We construct an instance of election with a set of $2k$ candidates $C = B \cup D$, where $B = \{b_1, \ldots, b_k\}$, and $D = \{d_1, \ldots, d_k\}$. The voters are divided into two groups, $V$ and $N \setminus V$, such that $|V|k\alpha = n - |V|$. Each voter from $V$ approves all the candidates from $B$. The voters from $N \setminus V$ are divided into $k$ equal-sized groups; each group approves one candidate from $D$ so that the approval sets of any two groups do not overlap. First, we will show that the elected committee $W$ for this instance has at most $\alpha k$ candidates from $B$. Indeed, if this were not the case, then by replacing in $W$ one candidate from $B$ with a candidate from $D$ we would change its score by:
\begin{align*}
\Delta > \frac{n - |V|}{k}\cdot\lambda(1) - |V|\lambda(k\alpha) = |V|\alpha\lambda(1) - |V|\lambda(k\alpha) \text{.}
\end{align*} 
Since $\Delta \leq 0$, we get that $\lambda(k\alpha) > \alpha\lambda(1)$, a contradiction. Thus, $W$ has at most $k\alpha$ candidates from $B$. Now, we assess the ratio $\tau$ of total utilities of $W$ and $B$:
\begin{align*}
\tau \leq \frac{\alpha k \cdot |V| + (k - \alpha k)\cdot \frac{n-|V|}{k}}{|V|k} = \alpha + \frac{(n-|V|)(1 - \alpha)}{|V|k} = \alpha + \alpha(1 - \alpha) = 2\alpha - \alpha^2 \text{.}
\end{align*}  
\end{proof}

The equation used in the statement of \Cref{prop:utilitarian_efficiency_tight}, $\alpha\lambda(1) = \lambda\left( k\alpha \right)$, has a unique solution: the left-hand side (LHS) is increasing, right-hand side  (RHS) is decreasing,  $\text{LHS} < \text{RHS}$ at $\alpha = \nicefrac{1}{k}$, and $\text{LHS} > \text{RHS}$ at $\alpha=1$.

E.g., from \Cref{thm:utilitarian_efficiency} and \Cref{prop:utilitarian_efficiency_tight} it follows that the utilitarian efficiency of $\lambda$-Thiele methods for $\lambda(i) = \frac{1}{i}$, $\lambda(i) = \frac{1}{\sqrt{i}}$ and $\lambda(i) = \frac{1}{i^2}$ is, respectively, $\Theta\left(\frac{1}{\sqrt{k}}\right)$, $\Theta\left(\frac{1}{\sqrt[3]{k}}\right)$, and $\Theta\left(\frac{1}{k^{\nicefrac{2}{3}}}\right)$.  

\section{Conclusion}

This paper quantifies the level of proportionality and the utilitarian efficiency for a number of multiwinner rules. We provide general tools that allow to estimate the proportionality degree and the utilitarian efficiency for a large subclass of voting rules. Our results show, in particular, a spectrum of rules that implement various tradeoffs between proportionality and utilitarian efficiency. 

For specific rules our conclusions can be summarized as follows. Phragm\'{e}n's Sequential Rule is roughly half as proportional as Proportional Approval Voting. Sequential PAV lies, in terms of proportionality, between Phragm\'{e}n's Sequential Rule and PAV. Further, some rules such as the $\lambda$-Thiele rule for $\lambda(i) = \nicefrac{1}{\sqrt{i}}$ offer a significantly better utilitarian efficiency than PAV, at the cost of a moderate loss of proportionality.

We consider the following open questions particularly interesting and important: can we find a rule that combines the virtues of Phragm\'{e}n's Sequential Rule and of PAV? Such a rule should in particular 
\begin{inparaenum}[(i)]
\item satisfy Pareto efficiency (which PAV satisfies and Phragm\'{e}n's Sequential Rule violates~\cite{lac-sko:quantitative}),
\item satisfy strong unanimity (which Phragm\'{e}n's Sequential Rule satisfies and PAV violates), and 
\item have a high proportionality degree. 
\end{inparaenum}
It is tempting to suggest a rule that first takes all unanimously approved candidates and complements the committee by running PAV, yet such rule looks a bit ad hoc, and its definition is specifically tailored for strong unanimity.\footnote{For example, for profiles where there exist candidates who are approved by ``almost'' all the voters, the rule would behave just like PAV. This is not intended as strong unanimity is only an example of a more complex set of requirements one would impose on an election rule.} Thus, finding stronger properties that better capture the idea exposed by strong unanimity is essential.

Second, can we use Phragm\'{e}n's Sequential Rule to compare any two committees? The definition of the rule allows only for finding winning committees. Comparing is, however, sometimes essential, e.g., when there are some external constraints put on the committee and when the goal is to find the best possible committee subject to these constraints. While Phragm\'{e}n proposed a few other (seemingly similar) rules based on global optimization goals (e.g., Phragm\'{e}n's Maximal Rule), these rules offer much worse proportionality and utilitarian efficiency, so the question is still open.

\bibliography{main}

\clearpage
\appendix
\section{Proofs Omitted From the Main Text}

\subsection{Proof of \Cref{lem:seq_pav_guarantee_transformed}}

\begin{replemma}{lem:seq_pav_guarantee_transformed}
The $k$-proportionality degree of Sequential PAV satisfies:
\begin{align*}
d_{\seqpav}(\ell, k) \geq \ell \cdot \frac{1}{k \cdot \Delta_{\seqpav}(k)} - 1 \text{.}
\end{align*}
For each $k \in \naturals_{+}$ the proportionality degree of Sequential PAV satisfies:
\begin{align*}
d_{\seqpav}(\ell) \leq \ell \cdot \frac{1}{k \cdot \Delta_{\seqpav}(k)} \text{.}
\end{align*} 
\end{replemma}
\begin{proof}
First, we prove that the $k$-proportionality guarantee of Sequential Proportional Approval Voting satisfies $d_{\seqpav}(\ell, k) \geq \ell \cdot \frac{1}{k \cdot \Delta_{\seqpav}(k)}-1$.
For the sake of contradiction let us assume that this is not the case, and that there exists an approval-based profile $A$ with $n$ voters, a committee size $k$, an $\ell$-large group of voters $V$ with $|\bigcap_{i \in V} A(i)| \geq \ell \cdot \frac{1}{k \cdot \Delta_{\seqpav}(k)}-1$, and a winning committee $W \in \calR_{\seqpav}(A, k)$ such that:
\begin{align*}
\frac{1}{|V|}\sum_{i \in V} |W \cap A(i)| < \ell \cdot \frac{1}{k \cdot \Delta_{\seqpav}(k)} - 1 \text{.}
\end{align*}


By the pigeonhole principle, there exists a candidate $c \notin W$ who is approved by all voters from $V$. 
Let us now estimate by how much adding $c$ to $W$ increases the PAV score. Fixing the average number of representatives that the voters from $V$ have in $W$, it is straightforward to check that the increase would be the smallest when each voter in $V$ has roughly the same number of representatives. Thus, adding $c$ to $W$ would increase the PAV score by more than
\begin{align*}
\frac{|V|}{\ell \cdot \frac{1}{k \cdot \Delta_{\seqpav}(k)} - 1 + 1} = |V| \cdot \frac{k}{\ell} \cdot \Delta_{\seqpav}(k) \geq n \Delta_{\seqpav}(k) \text{.}
\end{align*}
Clearly, adding $c$ to any subset of $W$ would result in at least the same increase of the PAV score. Since Sequential Proportional Approval Voting always selects a  candidate that increases the PAV score of the committee most, and since $c$ was not selected, we infer that $\Delta_{\seqpav}(A, k, W) > \frac{1}{n} \cdot n \Delta_{\seqpav}(k) = \Delta_{\seqpav}(k)$, a contradiction.

Second, let us fix $\epsilon > 0$ and $k \in \naturals$. We will construct an approval-based profile $A_{\hard}$ that witnesses that $d_{\seqpav}(\ell) \leq \ell \cdot \frac{1}{k \cdot \Delta_{\seqpav}(k)} + \epsilon$. Let $A$ be an approval-based profile and $W$ be a size-$k$ committee winning in $A$ such that $\Delta_{\seqpav}(A, k, W) > \Delta_{\seqpav}(k) - \epsilon_1$, for some $\epsilon_1$; the value of $\epsilon_1$ will become clear later on. Let us fix an integer $L$---intuitively, $L$ is a large number; the exact value of $L$ will also become clear from the further part of the proof. Let $n = |A|$. We construct $A_{\hard}$ by appending $L$ independent copies of $A$ (we clone both the voters and the candidates) and $y = \frac{Ln\ell}{Lk-\ell}$ voters who all approve some $\ell$ candidates, not approved by any voter from any copy of $A$---let us denote this set of $y$ voters by $V$. Let $n' = |A_{\hard}|$; clearly $n' = Ln + \frac{Ln\ell}{Lk-\ell} = \frac{L^2nk}{Lk-\ell}$. We set the required committee size to $k' = Lk$. Observe that $V$ is $\ell$-cohesive. Indeed, all voters from $V$ approve common $\ell$ candidates; further the relative size of $V$, $\nicefrac{y}{n'}$, is equal to
\begin{align*}
\frac{y}{n'} = \frac{Ln\ell}{Lk-\ell} \cdot \frac{Lk-\ell}{L^2nk} = \frac{\ell}{Lk} = \frac{\ell}{k'} \text{.}
\end{align*} 
Now, we show that the voters from $V$ have on average less than $\frac{\ell}{k \cdot \Delta_{\seqpav}(k)} + \epsilon$ representatives in some winning committee for $A_{\hard}$. Towards a contradiction, let us assume that this is not the case. Since the voters in $V$ are identical, this means that each such a voter has more than $\frac{\ell}{k \cdot \Delta_{\seqpav}(k)} + \epsilon$ representatives in each winning committee. Thus, for any winning committee, when the last representative of the voters from $V$ was added, the PAV-score of the committee increased by at most:
\begin{align*}
\Delta = y \cdot \frac{k \cdot \Delta_{\seqpav}(k)}{\ell + \epsilon k \cdot \Delta_{\seqpav}(k)} 
\end{align*}
The above expression can be written as  
\begin{align*}
\Delta = y \cdot \left(\frac{k \cdot \Delta_{\seqpav}(k)}{\ell} - \epsilon_2\right) \text{,}
\end{align*}
where $\epsilon_2$ is some parameter dependent on $\epsilon$ and $k$. Further, observe that:
\begin{align*}
\lim_{L \to \infty} \Delta &= \lim_{L \to \infty} y \cdot \left(\frac{k \cdot \Delta_{\seqpav}(k)}{\ell} - \epsilon_2\right) = \lim_{L \to \infty} \frac{Ln\ell}{Lk-\ell} \cdot \left(\frac{k \cdot \Delta_{\seqpav}(k)}{\ell} - \epsilon_2\right)\\
&= \frac{n\ell}{k} \cdot \left(\frac{k \cdot \Delta_{\seqpav}(k)}{\ell} - \epsilon_2\right) = n \Delta_{\seqpav}(k) - \epsilon_2 \cdot \frac{n\ell}{k} \text{.}
\end{align*}
Thus, there exists large enough $L$ and small enough $\epsilon_1$ such that:
\begin{align*}
\Delta < n \Delta_{\seqpav}(k) - n \epsilon_1 < n \Delta_{\seqpav}(A, k, W) \text{.}
\end{align*}
These are exactly the values of $L$ and $\epsilon_1$ that we use in our construction. In other words, the increase of the PAV score due to adding the last representative of $V$ is lower than the increase of the PAV score due to adding the last committee member to $W$. Thus, clearly there exists a winning committee that consists of some copies of the candidates from $W$ and less than $\frac{\ell}{k \cdot \Delta_{\seqpav}(k)} + \epsilon$ candidates from those approved by the voters from $V$, a contradiction. This completes the proof.
\end{proof}

\section{Equivalence of the Two Definitions of the Phragm\'{e}n's Rule}\label{seq:phragmen_equivalence}

In this section we prove that the money-based procedure described in the first part of \Cref{sec:phragmen} is equivalent to the Phragm\'{e}n's Sequential Rule.

As we already indicated, we can assume that the original definition of the Phragm\'{e}n's rule each candidate is associated with $n$ units of load instead of one. Recall that by $\ell_i(j)$ we denote the total load assigned to voter $i$ after the $j$-th iteration of the Phragm\'{e}n's rule. Let $\ell_{\max}(j)$ denote the total load of the voter who carries the maximal load after the $j$-th iteration of the rule, that is $\ell_{\max}(j) = \max_{i \in N} \ell_{i}(j)$. For each $j \in [k] \cup \{0\}$ we will prove by induction the following two statements:
\begin{enumerate}
\item The $j$-th candidate added by the Phragm\'{e}n's rule to the winning committee is the same as the $j$-th candidate added to the winning committee by the money-based procedure.
\item The value $\ell_{\max}(j) - \ell_i(j)$ is equal to the number of credits that voter $i$ is left with after the $j$-th candidate is added to the winning committee.
\end{enumerate}
Clearly, our induction hypothesis is true for $j = 0$, since at that point each of the two rules had not yet added any candidate to the committee. Assume that the induction hypothesis is true for $j = x$. We will show that it also holds for $j = x+1$. Let $t_j$ denote the time moment in the money-based procedure that corresponds to adding the $j$-th candidate to the committee. For each $\Delta$ in time $t_x + \Delta$ voter $i$ has the total amount of $\ell_{\max}(x) + \Delta - \ell_i(x)$ money (this follows by the inductive assumption, and by the fact that the voters earn money with the constant speed of one credit per time unit). Thus, the money-based procedure adds candidate $c$ the $(j+1)$-th committee member in time $t_x + \Delta$  if:
\begin{align*}
\sum_{i \in N(c)} \left(\ell_{\max}(x) + \Delta - \ell_i(x)\right) = n
\end{align*}
and if for each other candidate $c'$ we have 
\begin{align*}
\sum_{i \in N(c')} \left(\ell_{\max}(x) + \Delta - \ell_i(x)\right) < n
\end{align*}
Thus, it finds the smallest possible value $v = \ell_{\max}(x) + \Delta$ such that there exists a candidate $c$ for whom:
\begin{align*}
\sum_{i \in N(c)} \left(v - \ell_i(x)\right) = n
\end{align*}

On the other hand, the Phragm\'{e}n's rule aims at minimizing $\ell_{\max}(x+1)$. Thus, $\ell_{\max}(x+1)$ will be the minimal value $v'$ for which there exists a candidate $c$ and a load distribution $\{\delta_{i}\}_{i \in N}$ such that:
\begin{align*}
\sum_{i \in N(c)} \left(\delta_{i} + \ell_i(x)\right) = v', \quad \text{and} \quad \sum_{i \in N(c)} \delta_i = n \text{.}
\end{align*} 
These equalities can be reformulated as:
\begin{align*}
\sum_{i \in N(c)} \left(v' - \ell_i(x)\right) = \sum_{i \in N(c)} \delta_{i} = n \text{.}
\end{align*} 
Thus, we see that the minimization of $v'$ corresponds to the minimization of $v$, and so $\ell_{\max}(x+1) = v$. As a result the two rules will pick the same candidate $c$ to the committee, proving the first condition of our inductive hypothesis. Further, the money-based procedure will add this candidate after $\Delta$ time units from adding the previous one, where:
\begin{align*}
\Delta = v - \ell_{\max}(x) = \ell_{\max}(x+1) - \ell_{\max}(x) \text{.}
\end{align*}
Thus, for each $i \notin N(c)$ we will have that voter $i$ is left with the following amount of money:
\begin{align*}
\ell_{\max}(x) - \ell_i(x) + \Delta = \ell_{\max}(x+1) - \ell_i(x) = \ell_{\max}(x+1) - \ell_i(x+1)\text{.}
\end{align*}
Similarly, if $i \in N(c)$, then the number of credits left is equal to zero, but also $\ell_{\max}(x+1) = \ell_i(x+1)$; this proves the second condition of our inductive hypothesis, and so it completes the proof.

\section{Hardness of Estimating the Proportionality Degree of Seq-PAV}\label{sec:seq_pav_hardness}

Let us informally argue that obtaining a good estimation of the proportionality degree for Seq-PAV requires considerably different types of insights that are used for estimating the proportionality degree of PAV. 

Recall that in order to define PAV in \Cref{sec:rules_desc} we first formulated a scoring function $\sc_{\lambda_{\pav}} \colon S(C) \to \reals$ that assigns a numeric value to each committee; then, PAV selects a size-$k$ committee $W$ that maximizes $\sc_{\lambda_{\pav}}(W)$. The proof of \Cref{thm:pav_prop_degree}~\cite{AEHLSS18} (a tight estimation of the proportionality degree for PAV) crucially relies on one particular property: that the marginal contributions of all committee members sum up to at most $n$, i.e., that for each committee $W$:
\begin{align}\label{eq:marginal_contrib}
\left(\sc_{\lambda_{\pav}}(W) - \sc_{\lambda_{\pav}}(W\setminus \{c\}) \right) \leq n \text{.}
\end{align}

In this section we ask whether this, and other natural properties of $\sc_{\lambda_{\pav}}$ are sufficient to prove accurate estimations for the proportionality degree of Sequential PAV. In particular, we ask the following question: Assume we are given a function $f\colon S(C) \to \reals$ that is non-negative, monotonic, and that satisfies the marginal contributions property (\Cref{eq:marginal_contrib}); we ask whether the sequential algorithm for maximizing $f$ has comparably good proportionality degree to Seq-PAV. If this would be the case, it would suggest that one could focus on the aforementioned three properties of Seq-PAV in the attempt to obtaining tight proportionality guarantees.

We first constructed an LP where we represent the function $f$ through a large collection of variables, specifying the values of $f$ for all committees of size equal to at most 12. 
The three properties (being non-negative, monotonic, and the marginal contributions property) can be easily encoded in the LP. Then, we used a similar technique to the one described in~\cref{sec:seq_pav_objective,sec:first_lp} to assess the proportionality degree of the sequential algorithm for maximizing $f$. The values computed by our LP are given in \Cref{tab:seq_f_lower_bound}. These values are much lower than the ones obtained by the LPs from \Cref{sec:first_lp} and from \Cref{sec:relaxed_lp} (cf., \Cref{tab:seq_pav_lower_bound} and \Cref{tab:seq_pav_lower_bound2}). This shows that the aforementioned three properties are not sufficient to derive a good estimation of the proportionality degree for Seq-PAV.

\begin{table}[tbh]
\begin{center}
   \begin{minipage}{.21\linewidth}
   \centering
   \begin{tabular*}{\linewidth}{@{\extracolsep{\fill}}@{}rl@{}}
    \toprule
    $k$ & lower-bound \\ 
    \midrule
    1 & 1.0 $\ell$ \\ 
    2 & 1.0 $\ell$ \\ 
    3 & 0.6666 $\ell$ \\ 
    \bottomrule
   \end{tabular*}
   \end{minipage}%
   \hspace{0.5cm}
   \begin{minipage}{.21\linewidth}
   \centering
   \begin{tabular*}{\linewidth}{@{\extracolsep{\fill}}@{}rl@{}}
    \toprule
    $k$ & lower-bound \\ 
    \midrule  
    4 & 0.5 $\ell$ \\  
    5 & 0.4 $\ell$ \\ 
    6 & 0.3333 $\ell$ \\ 
    \bottomrule
   \end{tabular*}
   \end{minipage} 
 \hspace{0.5cm}
   \begin{minipage}{.21\linewidth}
   \centering
   \begin{tabular*}{\linewidth}{@{\extracolsep{\fill}}@{}rl@{}}
    \toprule
    $k$ & lower-bound \\ 
    \midrule
    7 & 0.2857 $\ell$ \\ 
    8 & 0.25 $\ell$ \\ 
    9 & 0.2222 $\ell$ \\ 
    \bottomrule
   \end{tabular*}
   \end{minipage} 
 \hspace{0.5cm}
   \begin{minipage}{.21\linewidth}
   \centering
   \begin{tabular*}{\linewidth}{@{\extracolsep{\fill}}@{}rl@{}}
    \toprule
    $k$ & lower-bound \\ 
    \midrule
    10 & 0.2 $\ell$ \\ 
    11 & 0.1818 $\ell$ \\ 
    12 & 0.1666 $\ell$ \\ 
    \bottomrule
   \end{tabular*}
   \end{minipage} 
\end{center}
\caption{The estimation of the $k$-proportionality degree of the sequential algorithm for maximizing $f$, under the assumptions that $f$ is non-negative, monotonic, and that it satisfies the marginal contributions property.}
\label{tab:seq_f_lower_bound}

\begin{center}
   \begin{minipage}{.21\linewidth}
   \centering
   \begin{tabular*}{\linewidth}{@{\extracolsep{\fill}}@{}rl@{}}
    \toprule
    $k$ & lower-bound \\ 
    \midrule
    1 & 1.0 $\ell$ \\ 
    2 & 1.0 $\ell$ \\ 
    3 & 0.8888 $\ell$ \\ 
    \bottomrule
   \end{tabular*}
   \end{minipage}%
   \hspace{0.5cm}
   \begin{minipage}{.21\linewidth}
   \centering
   \begin{tabular*}{\linewidth}{@{\extracolsep{\fill}}@{}rl@{}}
    \toprule
    $k$ & lower-bound \\ 
    \midrule  
    4 & 0.8141 $\ell$ \\  
    5 & 0.8372 $\ell$ \\ 
    6 & 0.7888 $\ell$ \\ 
    \bottomrule
   \end{tabular*}
   \end{minipage} 
 \hspace{0.5cm}
   \begin{minipage}{.21\linewidth}
   \centering
   \begin{tabular*}{\linewidth}{@{\extracolsep{\fill}}@{}rl@{}}
    \toprule
    $k$ & lower-bound \\ 
    \midrule
    7 & 0.7667 $\ell$ \\ 
    8 & 0.7492 $\ell$ \\ 
    9 & 0.7358 $\ell$ \\ 
    \bottomrule
   \end{tabular*}
   \end{minipage} 
 \hspace{0.5cm}
   \begin{minipage}{.21\linewidth}
   \centering
   \begin{tabular*}{\linewidth}{@{\extracolsep{\fill}}@{}rl@{}}
    \toprule
    $k$ & lower-bound \\ 
    \midrule
    10 & 0.7246 $\ell$ \\ 
    11 & 0.7150 $\ell$ \\ 
    12 & 0.7066 $\ell$ \\ 
    \bottomrule
   \end{tabular*}
   \end{minipage} 
\end{center}
\caption{The estimation of the $k$-proportionality degree of the sequential algorithm for maximizing $f$, under the assumptions that $f$ is non-negative, monotonic, submodular, and that it satisfies the marginal contributions property.}
\label{tab:seq_f_lower_bound2}
\end{table}

Next, we encoded one additional property of $f$: we assumed it is submodular. The values computed with this additional assumption are provided in \Cref{tab:seq_f_lower_bound2}. Here, the estimations of the proportionality degree are much better, yet still considerably worse than those derived by the relaxed LP from \Cref{sec:relaxed_lp}. This again shows that in order to obtain a tight estimation of the proportionality degree of Seq-PAV one needs some new insights and to explore other properties of $\sc_{\lambda_{\pav}}(W)$ (the ``standard'' properties and the one that was used for deriving the proportionality degree of PAV are not sufficient).

\end{document}